\newtheorem{asp}{\indent\bf Assumption}
\newtheorem{thm}{\indent\bf Theorem}
\begin{document}
\title{Communication-Pipelined Split Federated Learning for Foundation Model Fine-Tuning in UAV Networks
}

\author{
Zizhen Zhou, Ying-Chang Liang, Yanyu Cheng, and Wei Yang Bryan Lim\\

\thanks{

Z. Zhou is with the National Key Laboratory of Wireless Communications, University of Electronic Science and Technology of China (UESTC), Chengdu 611731, China (e-mail: zhouzizhen@std.uestc.edu.cn).

Y.-C. Liang is with the Center for Intelligent Networking and Communications (CINC), University of Electronic Science and Technology of China (UESTC), Chengdu 611731, China (e-mail: liangyc@ieee.org).

Y. Cheng is with the School of Cyberspace, Hangzhou Dianzi University, Hangzhou 310018, China (email: yycheng@hdu.edu.cn).

W. Y. B. Lim is with the College of Computing and Data Science, Nanyang Technological University, Singapore 639798 (e-mail: bryan.limwy@ntu.edu.sg).
}
}

\maketitle

\begin{abstract}
Deploying foundation models (FMs) on uncrewed aerial vehicles (UAVs) promises broad ``low-altitude economy'' applications. 
Split federated learning (SFL)-based fine-tuning leverages distributed data while keeping raw data local and reduces client-side burden by partitioning the model between client and server.  
However, the per-round training latency is dominated by stragglers.
Training paradigms featuring parallel gradient transmission (GT) allocate dedicated portions of downlink communication resources to each client.  
They may leave resources idle and suffer from prolonged GT latency, especially in UAV networks, where the communication latency typically far exceeds the computation latency.
To address this, we propose a sequential GT paradigm, where the server dedicates all downlink resources for the current GT.
We further propose communication-pipelined SFL (CPSFL), characterized by downlink GT priority scheduling and intra-round asynchronous training.
We investigate CPSFL-based LoRA fine-tuning of FMs in UAV networks and formulate an optimization problem to minimize a weighted sum of per-round training latency and worst-case client energy consumption by optimizing the split point selection (SPS) and the computing and communication resource allocation (CCRA) (the uplink bandwidth allocation and the server computing frequency allocation).
To solve this problem, we develop an attention-based deep reinforcement learning (DRL) framework, where the base station agent decides the split point and the CCRA in each round by leveraging previous round information, including UAV trajectories.
Simulation results show that the proposed DRL-based CPSFL scheme outperforms the parallel GT benchmarks, the ablation variants, the fixed CCRA scheme, while approaching the best fixed-SPS scheme.

\end{abstract}
\begin{IEEEkeywords}
    Split federated learning, UAV network, resource allocation, reinforcement learning
\end{IEEEkeywords}

\section{Introduction}
\label{sec_introduction}
Uncrewed aerial vehicles (UAVs) are widely applied in low-altitude economic activities such as natural resource management, power facility inspection, and public security patrol, owing to their flexible deployment and controllable mobility \cite{kheddar2025recent}. 
During missions, UAVs collect large volumes of sensory data (e.g., images and videos), which can be leveraged to adapt foundation models (FMs) to diverse downstream tasks \cite{kheddar2025recent}.  
Fine-tuning FMs, rather than training from scratch, substantially reduces the required data and computation \cite{qu2025mobile}.

Centralized fine-tuning in a data center after the UAVs land is a straightforward approach, but it suffers from some limitations: 
(i) bandwidth-constrained backhaul links from the landing site to the data center;  
(ii) privacy or regulatory concerns with sensitive data; and  
(iii) the need for near-real-time adaptation to dynamic environments (e.g., weather, lighting).
These challenges motivate edge-side, ongoing FM adaptation during missions.
Federated learning (FL) offers a promising solution, enabling collaborative model improvement while avoiding raw data transmission: clients train locally and upload only model parameters to a server for aggregation \cite{chen2024role, yan2025federated}.  
Nevertheless, full-parameter fine-tuning (FPFT) of an entire FM is often impractical on UAVs with limited memory, computing capability, and energy resources.

Low-rank adaptation (LoRA) \cite{hu2022lora}, a prominent parameter-efficient fine-tuning (PEFT) method, greatly reduces the number of trainable parameters (TPs) while achieving accuracy comparable to FPFT.
LoRA is thus attractive for FL in UAV networks because it (i) reduces memory usage by eliminating the need to store optimizer states for frozen parameters and (ii) reduces communication overhead, as UAV clients only need to transmit a small number of TPs for federated aggregation.
However, even with LoRA, hosting and updating an entire FM on resource-constrained UAVs can still be challenging.

Split federated learning (SFL) further alleviates the client-side burden by splitting the FM into a client-side model and a server-side model, while still enabling parallel client training \cite{thapa2022splitfed}. 
However, due to synchronized federated aggregation, the training latency in the SFL is determined by the slowest client, commonly referred to as the ``straggler'' \cite{lin2024split}. 
In UAV networks, the straggler issue is aggravated by the heterogeneous computing and communication capabilities, mobility, and limited energy of UAVs \cite{qiang2025split}.
Split point selection (SPS) significantly influences both training latency and client-side energy consumption, as it determines the computational load on both the client and server, as well as the communication overhead \cite{yan2025federated, lin2024split, qiang2025split, xu2025optimizing, ni2025federated}.
Efficient utilization of server-side computing and communication resources is also critical in mitigating the straggler problem \cite{lin2024split, qiang2025split, xu2025optimizing, ni2025federated}.

\subsection{Related Works and Challenges}
Recently, SPS and communication-computing resource allocation (CCRA) for SFL in wireless networks have received significant attention \cite{xu2023accelerating, chu2025online, qiang2025joint, zhu2024esfl, wen2025training, ao2025semi, wang2025split, hu2025performance, zhang2025efficient, 
zhang2025split, li2025energy, chen2025privacy, 
wu2024joint, yu2025model, wu2025split, 
wu2023split, lin2024efficient, ao2024federated, lin2025adaptsfl, lin2025hasfl, 
qiang2024adaptive,  
chen2025memory, 
zhao2025efficient,
wang2025federated, solat2024split, khan2025qos}.
To mitigate stragglers, existing works explore various strategies, including:
(i) individual SPS for each client \cite{xu2023accelerating, chu2025online, qiang2025joint, zhu2024esfl, wen2025training, ao2025semi, wang2025split, hu2025performance, li2025energy, chen2025privacy, wu2024joint, yu2025model, wu2025split, lin2025adaptsfl, lin2025hasfl, qiang2024adaptive, chen2025memory};
(ii) client grouping \cite{wu2023split, zhang2025efficient};
(iii) parallel split learning without client-side model aggregation \cite{lin2024efficient, ao2024federated}; and
(iv) inter-round asynchronous training \cite{ao2025semi, chu2025online}.
Moreover, SFL for LoRA-based fine-tuning in wireless networks has been studied in \cite{chen2025memory, li2025energy, zhang2025split, chen2025privacy, zhao2025efficient, wang2025federated}, considering aspects such as server energy consumption \cite{li2025energy}, client storage constraints \cite{zhang2025split, chen2025privacy}, privacy preservation \cite{chen2025privacy}, LoRA rank optimization \cite{zhao2025efficient}, and smashed data compression \cite{zhang2025split}.
However, they may incur long per-round latency due to inefficient utilization of downlink communication resources. Specifically, we analyze as follows.
\begin{table}[t]
\centering  
\caption{Comparison of server-side computing and communication resource utilization paradigms}  
\label{table_SFL_paradigm}  %
\begin{tabular}{cccc}  
\toprule 
        Paradigms & \makecell[c]{Server-side\\computing}  & \makecell[c]{Downlink gradient\\transmission}  & Related studies \\ 
\midrule
        SFL-PP & Parallel & Parallel & \cite{xu2023accelerating, chu2025online, zhu2024esfl, wen2025training, ao2025semi, zhang2025efficient, wang2025split, qiang2025joint, hu2025performance, li2025energy, zhang2025split, chen2025privacy, wu2024joint, yu2025model, wu2025split} \\ 
        SFL-SP & Sequential & Parallel & \makecell[c]{\cite{wu2023split, lin2024efficient, ao2024federated, lin2025adaptsfl, lin2025hasfl, zhao2025efficient, qiang2024adaptive, chen2025memory} and\\PipeSFL \cite{gao2024pipesfl}} \\  
        SFL-PS & Parallel & Sequential & CPSFL (ours) \\ 
\bottomrule  
\end{tabular}
\vspace{-0.4cm}
\end{table}

\begin{figure}[t]
\centering
\includegraphics[width=0.99\linewidth]{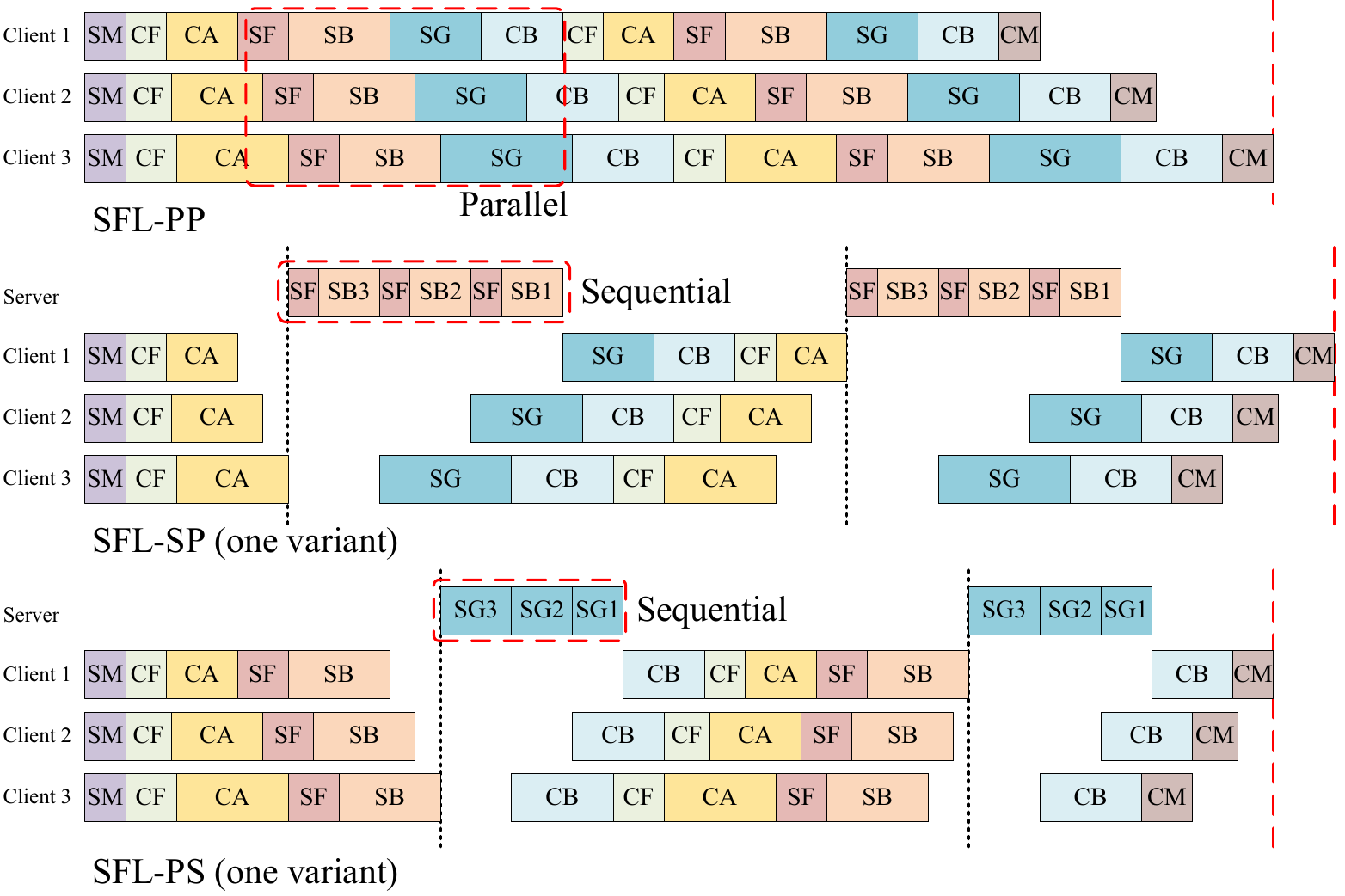}
\caption{The timeline of one training round (consisting of two local iterations) for three clients when SFL-PP (top), SFL-SP (middle), or SFL-PS (bottom) is applied. 
The notations SM, CF, CA, SF, SB, SG, CB, and CM correspond to the eight steps in Fig. \ref{scenario}, respectively.}
\label{Time_diagram_SFL_paradigm}
\vspace{-0.4cm}
\end{figure}

We classify SFL into three types based on whether the server can compute tasks for different clients in parallel and whether it can transmit gradients to different clients simultaneously over the downlink.
Table \ref{table_SFL_paradigm} summarizes these paradigms, referred to as SFL-PP, SFL-SP, and SFL-PS, and their time diagrams are illustrated in Fig. \ref{Time_diagram_SFL_paradigm}.

\textbf{SFL-PP with full parallelism}: 
In SFL-PP, all steps of different clients are executed in parallel.
Specifically, the server allocates a fixed portion of the shared resources, including its computing frequency, the bandwidth for downlink gradient transmission (GT), and corresponding transmit power, to each client and keeps this allocation unchanged in one training round.
SFL-PP is widely adopted in existing works \cite{xu2023accelerating, chu2025online, zhu2024esfl, wen2025training, ao2025semi, zhang2025efficient, wang2025split, qiang2025joint, hu2025performance, li2025energy, zhang2025split, chen2025privacy, wu2024joint, yu2025model, wu2025split} due to its analytical tractability: the per-round training latency equals the maximum per-round latency among all clients.
However, idle resources allocated for one client cannot be utilized by the ongoing computations or downlink GT of other clients, which may increase the per-round latency.
For example, in Fig. \ref{Time_diagram_SFL_paradigm}, when the server begins GT for client 1, the downlink resources allocated to client 3 remain idle.
\textbf{SFL-SP with sequential server-side computing}: 
SFL-SP uses all computing resources for the current computing task and decides the scheduling order of the client tasks.
We refer to the most widely adopted form as vanilla SFL-SP \cite{wu2023split, lin2024efficient, ao2024federated, lin2025adaptsfl, lin2025hasfl, zhao2025efficient, qiang2024adaptive}, which exhibits two synchronizations per local iteration: (i) the server starts computing only after receiving the smashed data from all clients, and (ii) it starts GT only after completing the computing tasks of all clients.
Thus, the vanilla SFL-SP is vulnerable to the straggler issue.  
To mitigate this, the pipelining is introduced to overlap the communication latency and the client computing latency with the server computing latency (the SF and SB steps in Fig. \ref{Time_diagram_SFL_paradigm}) as much as possible \cite{chen2025memory, gao2024pipesfl}.
In \cite{chen2025memory}, the server prioritizes tasks from clients with longer backward propagation (BP) times.
In \cite{gao2024pipesfl}, PipeSFL is proposed to enable finer-grained pipelined computing scheduling via server-side computing priority scheduling and intra-round asynchronous training.
Nevertheless, in wireless networks such as UAV networks, where the computing latency is much smaller than the communication latency, the performance gain offered by PipeSFL may be limited.
\textbf{SFL-PS with sequential downlink GT}: 
Different from SFL-PP and SFL-SP, we propose SFL-PS, which uses all downlink resources for the current GT and decides the scheduling order across clients.
By reducing the GT latency and allowing clients to immediately proceed to subsequent steps upon receiving gradients, SFL-PS has the potential to reduce the per-round latency of SFL in wireless networks.
Notably, SFL-PS has not been explored in existing studies.

In addition to the straggler issue caused by the heterogeneity of UAV clients, their mobility introduces another major challenge.
The SFL with mobile clients, such as vehicles, UAVs, and satellites, has been studied in \cite{qiang2024adaptive, wu2024joint, yu2025model, solat2024split, wu2025split}.
Nevertheless, these works oversimplify the channel variations caused by the moving clients: they either assume the channel remains unchanged within a training round \cite{qiang2024adaptive, solat2024split, wu2025split} or approximate communication rates using time-averaged values over the coverage period \cite{wu2024joint, yu2025model}. 
Given that a single training round often exceeds ten seconds \cite{qiang2024adaptive, yu2025model}, a finer-grained latency analysis is essential.
Building on this, the historical trajectory data can be leveraged to infer future mobility patterns and enable proactive resource management.
To handle the environment with uncertain UAV trajectory and the complexity of the optimization problem, deep reinforcement learning (DRL) techniques are promising for decision-making in SFL scenarios \cite{wu2024joint, yu2025model, qiang2025joint, khan2025qos}.

\subsection{Main Contributions}
In summary, existing SFL training paradigms underutilize downlink communication resources, and existing SFL resource allocation studies inadequately address mobile clients, particularly lacking slot-level channel modeling and trajectory-aware decision-making.  
To tackle these challenges, we propose communication-pipelined SFL (CPSFL) and an attention-based DRL framework for joint SPS and CCRA in CPSFL-enabled UAV networks, supported by the fine-grained latency analysis and the UAV trajectory features extraction.
The main contributions are summarized as follows:
\begin{itemize}
\item To improve downlink resource utilization, we propose SFL-PS, where the server transmits gradients sequentially. 
To mitigate stragglers in vanilla SFL-PS, we further propose CPSFL, incorporating two PipeSFL-inspired enhancements: (i) downlink GT priority scheduling that increases latency overlap by prioritizing clients with larger lags, and (ii) intra-round asynchronous training that reduces downlink idling by enabling immediate GT upon server computation completion.
We also provide the optimality proofs for the scheduling policy under certain simplifying assumptions, along with the latency analysis and comparisons.
\item To finely capture the UAV mobility, we propose a fine-grained latency analysis under SFL-PS where the channel varies per time slot instead of per training round.
Based on this, we formulate an optimization problem to minimize both the per-round training latency and the maximum per-client energy consumption by jointly optimizing SPS and CCRA (i.e., the uplink bandwidth allocation and the server computing frequency allocation).
\item Decision-making at the round start is intractable due to the problem complexity and the unavailability of current-round channel knowledge.
To address this, we design an attention-based DRL framework, where the base station (BS) agent leverages previous round information to determine the split point and the CCRA.
An attention mechanism enables effective feature extraction from variable-length UAV trajectories.
\item Simulation results show that CPSFL achieves lower per-round latency than SFL-PP, PipeSFL, and ablation variants in UAV networks where the communication latency is dominant.
Moreover, the DRL-based CPSFL scheme outperforms its variant that does not leverage UAV trajectory and the fixed CCRA scheme, and approaches the best fixed-SPS scheme.
\end{itemize}

\subsection{Organizations}
The rest of this paper is organized as follows:
Section \ref{secSystemModel} presents the system model, the slot-level latency analysis under the SFL-PS paradigm, and the optimization problem.
Section \ref{secCPSFL} presents the proposed CPSFL.
In Section \ref{secproposedScheme}, we propose the attention-based DRL framework to decide the SPS and the CCRA.
Section \ref{sec_Simulation_Results} shows the simulation results.
Finally, Section \ref{sec_Conclusions} concludes this paper.

Notations:
The lowercase, bold lowercase, and bold uppercase, i.e., $a$, ${\bf{a}}$, and ${\bf{A}}$ are scalar, vector, and matrix, respectively.
$\mathbb{R}^{a \times b}$ denotes the space of $a \times b$ real-valued matrices.
$\left| {\cal A} \right|$ denotes the cardinality of set ${\cal A}$.
$(\cdot )^\top$ denotes transpose.
${\mathbb{E}}\{\cdot\}$ denotes the average operation.

\section{System Model}
\label{secSystemModel}
\begin{figure}[t]
\centering
\includegraphics[width=0.99\linewidth]{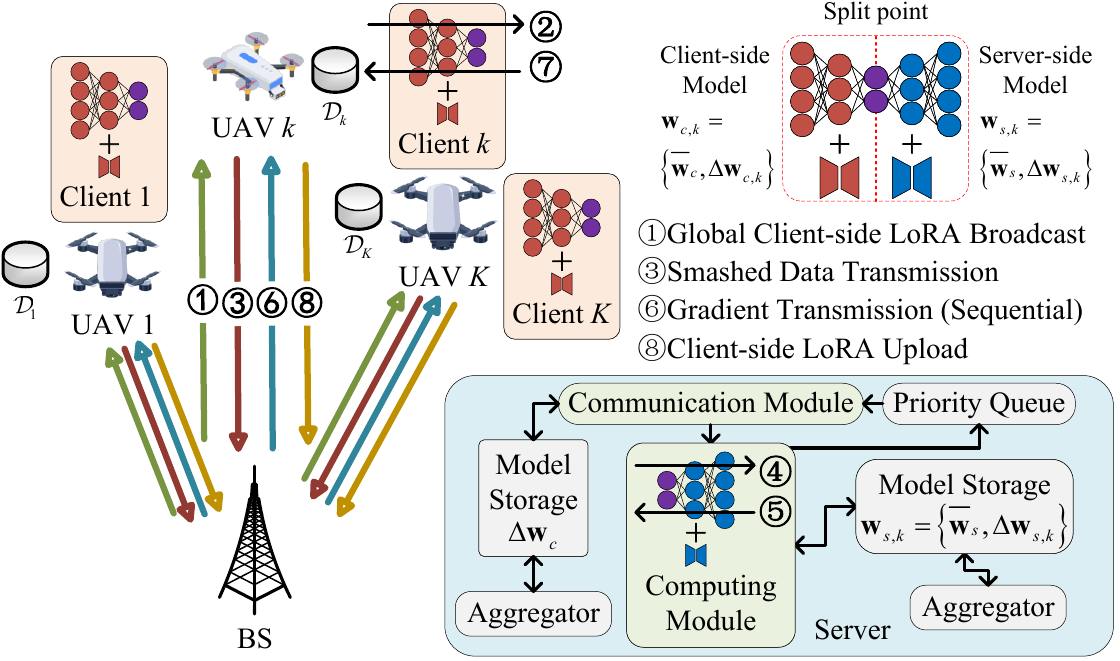}
\caption{Communication-pipelined split federated learning for foundation models LoRA fine-tuning in UAV networks.}
\label{scenario}
\vspace{-0.4cm}
\end{figure}

As shown in Fig. \ref{scenario}, we consider a CPSFL-enabled UAV network, where a BS collaborates with $K$ UAVs to fine-tune a complete FM ${\bf{w}}$ through the CPSFL. 
Specifically, the BS acts as the server, while the UAVs serve as the clients.
The FM is split at the split point $u$ into a server-side FM ${\bf{w}}_s$ and a client-side FM ${\bf{w}}_c$, which are fine-tuned at the BS and the UAVs, respectively. 
In each local iteration, UAV $k$ processes a data batch ${\cal{D}}_k = \left\{{\bf{X}}_k, {\bf{y}}_k\right\}$ of size $B$, where ${\bf{X}}_k = \left\{{\bf{x}}_k^b\right\}_{b=1}^{B}$ is the set of raw data and ${\bf{y}}_k = \left\{y_k^b\right\}_{b=1}^{B}$ is their corresponding labels.
We denote the set of UAV indices as ${\cal K}$.

\subsection{Preliminaries of LoRA and SFL}
\label{secSFL_PipeSFL}

LoRA adapts pre-trained models by injecting trainable low-rank matrices while freezing original weights \cite{hu2022lora}. 
For a weight matrix ${\bf{W}} \in {\mathbb{R}}^{d \times h}$, LoRA represents updates as ${\bf{W}}_0 + \Delta {\bf{W}} = {\bf{W}}_0 + {\bf{B}}{\bf{A}}$, where ${\bf{B}} \in {\mathbb{R}}^{d \times r}$, ${\bf{A}} \in {\mathbb{R}}^{r \times h}$, and the rank $r \ll \min(d, h)$. 
This reduces the number of TPs from $d \times h$ to $(d + h) \times r$ while preserving performance.

We denote the client-side FM as ${\bf{w}}_c = \left\{ {\overline{\bf{w}}}_c, \Delta {\bf{w}}_c \right\}$, where ${\overline{\bf{w}}}_c$ is the frozen pre-trained FM parameters and $\Delta {\bf{w}}_c$ is the client-side TPs, i.e., the LoRA module weights.
Similarly, we denote the server-side FM as ${\bf{w}}_s = \left\{ {\overline{\bf{w}}}_s, \Delta {\bf{w}}_s \right\}$, where $\Delta {\bf{w}}_s$ denotes the TPs including the LoRA module weights and the task module, e.g., the classification head in the classification task.

The SFL process consists of $N$ training rounds, each comprising eight steps as shown in Fig. \ref{scenario}. 
At the start of round $n$, the server broadcasts the latest global client-side TPs $\Delta {\bf{w}}_c(n-1)$ (i.e., $\Delta {\bf{w}}_{c,k}(n, 0)$, $\forall k$) to all clients (Step 1). 
The server set $\Delta {\bf{w}}_{s,k}(n, 0)=\Delta {\bf{w}}_s(n-1)$, $\forall k$.
Subsequently, the $K$ clients perform $I$ local iterations, with each local iteration involving Steps 2 to 7, detailed as follows:
Step 2: Client $k$ performs the forward propagation (FP) of the client-side model ${\bf{w}}_{c,k}(n, i-1)$ and obtains the output (called smashed data) ${\bf{A}}_k(n,i)\!=\!f({\bf{X}}_k(n,i);{\bf{w}}_{c,k}(n, i-\!1))$.
Step 3: Client $k$ transmits ${\bf{A}}_k(n,i)$ and label ${\bf{y}}_k(n,i)$ to the server.
Steps 4 and 5: The server performs the FP and BP of the server-side model ${\bf{w}}_{s,k}(n, i-1)$ and obtains $\Delta {\bf{w}}_{s,k}(n, i)$.
Step 6: The server transmits the gradients of the smashed data ${\bf{G}}_k(n,i)=\nabla \ell \left( {{{\bf{A}}_k}(n,i), {\bf{y}}_k(n,i);{{\bf{w}}_{s,k}(n, i-1)}} \right)$ to client $k$.
Step 7: Client $k$ performs the BP of the client-side model to obtain $\Delta {\bf{w}}_{c,k}(n, i)$.
After completing $I$ local iterations, each client transmits $\Delta {\bf{w}}_{c,k}(n, I)$ to the server (Step 8). 
Finally, the server aggregates the $\Delta {\bf{w}}_{c,k}(n, I)$ from $K$ clients to obtain the updated global client-side TPs $\Delta {\bf{w}}_c(n)$.
Concurrently, the server aggregates the $\Delta {\bf{w}}_{s,k}(n, I)$ to obtain the updated global server-side TPs $\Delta {\bf{w}}_s(n)$, completing one training round.

\subsection{SFL-PS Paradigm}
\label{SFL_PS}
We propose a training paradigm, SFL-PS, characterized by parallel server-side computing and sequential downlink GT, as illustrated in Table \ref{table_SFL_paradigm} and Fig. \ref{Time_diagram_SFL_paradigm}.
Specifically, in steps 4 and 5, the server partitions its computing resources into dedicated portions for individual clients, with the allocation fixed in one training round\footnote{To enable parallel server-side computing, the server model storage should include $K$ server-side FMs ${\bf{w}}_s$ for $K$ clients, respectively.}.
In step 6, all downlink resources are used for the current GT, and the GTs are scheduled in a specific order.
We assume that the uplink bandwidth $W_U$ and the downlink bandwidth $W_D$ do not overlap.
At the start of each round, the server decides the following variables
\begin{itemize}
\item $u\left( n \right)$ is the split point for all clients.
\item ${\alpha _k}\left( n \right)$ is the fraction of server computing frequency allocated to client $k$. 
\item ${\beta _k}\left( n \right)$ is the fraction of uplink bandwidth allocated to client $k$. 
\end{itemize}

Vanilla SFL-PS employs intra-round synchronous training and undesignated GT scheduling.
Specifically, during a local iteration, the server starts GT only after completing the server-side model FP and BP for all clients, and the scheduling order for downlink transmission is undesignated, for example, random or first-come-first-served (FCFS). 
These two characteristics may lead to long per-round training latency.  
To address these limitations, we propose CPSFL in Section \ref{secCPSFL}.

\subsection{Fine-Grained Latency and Energy Consumption Analysis}
In this section, we first analyze the achievable communication rate for each client, and then analyze the time and energy consumption of the eight steps in each training round.
To finely capture the UAV client mobility, we propose a fine-grained latency analysis where the channel varies per time slot instead of per training round.
We denote the channel gain between the BS and UAV $k$ in the $s$-th time slot as $h_k\left( s \right)$, which is a function of the distance between the BS and UAV $k$ and assumed to be constant over a time slot with length $\tau_0$.
At the start of round $n$, the server allocates $W_k\!\left( n \right)={\beta _k}\!\left( n \right) W_U$ bandwidth to client $k$ for the uplink transmission.
The uplink rate from client $k$ to the server in round $n$ is given by
\begin{align}
\label{rate_uplink}
\!{R_{U,k}}\!\left( n,{h_k}\!\left( s \right) \right) \!=\! W_k\!\left( n \right){\log _2}\!\left( 1 \!+ {p_k} {h_k}\!\left( s \right)\!/\! \left(W_k\!\left( n \right)\!{N_0} \right)\right),\!
\end{align}
where $p_k$ is the transmit power of client $k$ and $N_0$ is the noise power spectral density (PSD).
The downlink rate from the server to client $k$ in round $n$ is given by
\begin{align}
\label{rate_downlink_broadcast}
{R_{D,k}}\left( s \right) = W_D{\log _2}\left( 1 + P_S {h_k}\left( s \right)/ \left(W_D{N_0} \right)\right),
\end{align}
where $P_S$ is the total transmit power of the server.
Then, in round $n$, the average communication rate of a uplink transmission step for client $k$, starting from $t_B$ and ending at $t_E$, can be expressed as
\begin{align}
\label{rate_average}
{{\overline R}_{U,k}}&\left( {n,{t_B},{t_E}} \right) = \Big({R_{U,k}}\left( {n,{h_k}\left( {{s_B}} \right)} \right)\left( {\left( {{s_B} + 1} \right){\tau _0} - {t_B}} \right) \nonumber\\
& +\sum_{j = {s_B} + 1}^{{s_E} - 1} {{R_{U,k}}\left( {n,{h_k}\left( j \right)} \right)} {\tau _0} \nonumber\\
& +{R_{U,k}}\left( {n,{h_k}\left( {{s_E}} \right)} \right)\left( {{t_E} - {s_E}{\tau _0}} \right)\Big)/\left( {{t_E} - {t_B}} \right),
\end{align}
where ${s_B} = \left\lfloor {{t_B}/{\tau _0}} \right\rfloor$ and ${s_E} = \left\lfloor {{t_E}/{\tau _0}} \right\rfloor$ are the time slots corresponding to $t_B$ and $t_E$.
If ${s_B} = {s_E}$, then ${{\overline R}_{U,k}}\left( {n,{t_B},{t_E}} \right) = {R_{U,k}}\left( {n,{h_k}\left( {{s_B}} \right)} \right)$.
Similarly, by replacing ${R_{U,k}}\left( n,{h_k}\!\left( s \right) \right)$ with ${R_{D,k}}\left( s \right)$ in \eqref{rate_average}, the average rate for the downlink transmission ${{\overline R}_{D,k}}\left( {t_B},{t_E} \right)$ can be obtained.

\subsubsection{Step 1}
The client-side TPs broadcasting latency is ${\tau _{SM}}\left( n \right) = \max \left\{{\tau _{SM,k}}\left( n \right)\right\}$ and ${\tau _{SM,k}}\left( n \right)$ is given by
\vspace{-0.15cm}
\begin{align}
\label{t_SM}
{\tau _{SM,k}}\left( n \right) = {\Gamma _M}\left( {u\left( n \right)} \right)/{{\overline R}_{D,k}}\left( {{t_B},{t_B} + {\tau _{SM,k}}\left( n \right)} \right),
\end{align}
where $\Gamma _M\left( {u} \right)$ is the data size (in bits) of the client-side TPs when the split point is $u$ and ${t_B} = {t_{SM,k,B}}\left( n \right)$ is the start time of this step.
Note that ${\tau _{SM,k}}\left( n \right)$ is obtained by solving equation \eqref{t_SM}.
The latencies of each communication step below are also obtained by solving the corresponding equations.

\subsubsection{Step 2}
The client-side model FP latency of client $k$ is
\begin{align}
\label{t_CF}
{\tau_{CF,k}}\left( {n,i} \right) = B{\Psi _{CF}}\left( {u\left( n \right)} \right)/\left( {{\kappa _k}{f_k}} \right),
\end{align}
where ${\Psi _{CF}}\!\left( {u} \right)$ is the computation workload (in FLOPs) with one data sample when the split point is $u$, $\kappa _k$ is the computing intensity (in FLOPs/cycle) of client $k$, and $f_k$ is the computing frequency of client $k$. 
The corresponding energy consumption is ${e_{F,k}}\left( {n,i} \right) = {\omega _k}f_k^3{\tau_{CF,k}}\left( {n,i} \right)$,
where $\omega _k$ is the coefficient (in Watt/(cycle/s)$^3$) according to the chip architecture \cite{wu2024joint}.

\subsubsection{Step 3}
The smashed data transmission latency of client $k$ is
\begin{align}
\label{t_CA}
{\tau _{CA,k}}\!\left( {n,i} \right) \!= \!B{\Gamma _A}\!\left( {u\!\left( n \right)} \right)\!/{{\overline R}_{U,k}}\!\left( {n,{t_B},{t_B} \!+\! {\tau _{CA,k}}\!\left( {n,i} \right)} \right)\!,
\end{align}
where $\Gamma _A\left( {u} \right)$ is the data size (in bits) of the smashed data when the split point is $u$ and ${t_B} = {t_{CA,k,B}}\left( {n,i} \right)$.
The corresponding energy consumption is ${e_{A,k}}\left( {n,i} \right) = {p_k}{\tau_{CA,k}}\left( {n,i} \right)$.

\subsubsection{Steps 4 and 5}  
The server-side FP and BP latency for client $k$ is
\begin{align}  
\label{t_SFB}  
\tau_{S,k}\!\left(n,i\right) \!=\! B\!\left({\Psi _{SF}}\!\left( {u\!\left( n \right)} \right) \!+\! {\Psi _{SB}}\!\left( {u\!\left( n \right)} \right)\right) \!/\! \left( {{\kappa _S}{\alpha _k}\!\left( n \right)\!{f_S}} \right),\!
\end{align}  
where $\Psi_{SF}(u)$ and $\Psi_{SB}(u)$ denote the computational workloads (in FLOPs) per data sample for FP and BP, respectively, when the split point is $u$; $\kappa_S$ is the server's computing intensity (in FLOPs/cycle); and $f_S$ is the server's computing frequency.

\subsubsection{Step 6}
The gradient transmission latency for client $k$ is
\begin{align}
\label{t_SG}
{\tau _{SG,k}}\left( {n,i} \right) \!=\! B{\Gamma _G}\!\left( {u\left( n \right)} \right)\!/{{\overline R}_{D,k}}\left( {{t_B},{t_B} + {\tau _{SG,k}}\!\left( {n,i} \right)} \right),\!
\end{align}
where $\Gamma _G\!\left( {u} \right)$ is the data size (in bits) of the gradients of the smashed data when the split point is $u$ and ${t_B} = {t_{SG,k,B}}\left( {n,i} \right)$.

\subsubsection{Step 7}
The client-side model BP latency of client $k$ is
\begin{align}
\label{t_CB}
{\tau_{CB,k}}\left( {n,i} \right) = B{\Psi _{CB}}\left( {u\left( n \right)} \right)/\left( {{\kappa _k}{f_k}} \right),
\end{align}
where ${\Psi _{CB}}\left( {u} \right)$ is the computation workload (in FLOPs) with one data sample when the split point is $u$.
The corresponding energy consumption is ${e_{B,k}}\left( {n,i} \right) = {\omega _k}f_k^3{\tau_{CB,k}}\left( {n,i} \right)$.

\subsubsection{Step 8}
The client-side TPs uplink transmission latency of client $k$ is
\begin{align}
\label{t_CM}
{\tau _{CM,k}}\left( n \right) = {\Gamma _M}\!\left( {u\left( n \right)} \right)/{{\overline R}_{U,k}}\!\left( {n,{t_B},{t_B} + {\tau _{CM,k}}\!\left( n \right)} \right),\!
\end{align}
where ${t_B} = {t_{CM,k,B}}\left( n \right)$.
The corresponding energy consumption is ${e_{M,k}}\left( n \right) = {p_k}{\tau_{CM,k}}\left( n \right)$.

\subsubsection{Total energy consumption for one round}
In round $n$, the total energy consumption of client $k$ is
\begin{align}
\label{energy}
\!\!{e_k}\!\left( n \right) \!= \!\!\sum_{i = 1}^I {\!\left( {{e_{F,k}}\!\left( {n,\!i} \right) \!+\! {e_{A,k}}\!\left( {n,\!i} \right) \!+\! {e_{B,k}}\!\left( {n,\!i} \right)} \right)}  \!+\! {e_{M,k}}\!\left( n \right)\!.\!\!
\end{align}

\subsubsection{Lag of clients}
We define the lag of client $k$ as the time interval from the start of its BP in the previous local iteration to the completion of its server-side computation in the current iteration, as illustrated in Fig. \ref{Time_diagram_CPSFL}.
Specifically, the lag of client $k$ in local iteration $i$ of round $n$ is defined as
\begin{align}
\label{lag_client_k}
l_k\left(n,i\right)=&\tau_{CB,k}\left(n,i-1\right)+\tau_{CF,k}\left(n,i\right)+\nonumber\\
&\tau_{CA,k}\left(n,i\right)+\tau_{S,k}\left(n,i\right).
\end{align}

\subsubsection{The upper bound of the total latency for one round}
The total training latency of round $n$ is denoted by $\tau\left( n \right)$.
Under the SFL-PS paradigm, the upper bound of $\tau\left( n \right)$, denoted as $\tau_{\max}\left( n \right)$, is achieved when the GT of the client with the highest lag in each iteration is performed at the end, i.e.,
\begin{align}
\label{Latency_SFLPS_upper_bound}
\tau\left(n\right)\le&\tau_{\max}\left( n \right)=\max_k\left\{\tau_{SM,k}\left(n\right)\right\}+\nonumber\\
&\max_k\left\{\tau_{CF,k}\left(n,1\right)+\tau_{CA,k}\left(n,1\right)+\tau_{S,k}\left(n,1\right)\right\}+\nonumber\\
&\sum_{i=1}^{I-1}\left(\sum_{k=1}^{K}{\tau_{SG,k}\left(n,i\right)}+\max_k\left\{l_k\left(n,i+1\right)\right\}\right)+\nonumber\\
&\sum_{k=1}^{K}{\tau_{SG,k}\left(n,I\right)}+\max_k\left\{\tau_{CB,k}\left(n,I\right)+\tau_{CM,k}\left(n\right)\right\}.
\end{align}

\subsection{Problem Formulation}
In the SFL-PS-enabled UAV network, the client mobility induces time-varying wireless channels and achievable communication rates, which affect the latency and the energy consumption. Therefore, to minimize the training latency and energy consumption per round, adjusting the SPS and the CCRA when each round begins is necessary.
To prevent UAVs from depleting their energy too quickly, we focus on the maximum energy consumption of UAVs.
The optimization problem can be formulated as 
\begin{subequations}
\label{Problem_ori}
\begin{align}
\label{Problem_ori_obj}
&{\min_{\left\{ {{\alpha _k}\left( n \right),{\beta _k}\left( n \right),u\left( n \right)} \right\}}}\;\tau\left( n \right) +  \lambda {\max}_k \left\{ {{e_k}\left( n \right)} \right\}\\
\label{cons_power_ratio}
&\;\;\;\;\;\;\;\;\;\;\;\;\;\;{\rm{s.t.}}\;\;\;\;\;\;
{\alpha_{\min}} \le {\alpha _k}\left( n \right) \le 1, \sum_{k = 1}^K {{\alpha _k}\left( n \right)}  = 1,\\
\label{cons_bandwidth_ratio}
&\;\;\;\;\;\;\;\;\;\;\;\;\;\;\;\;\;\;\;\;\;\;\;\;\;
{\beta _{\min}} \le {\beta _k}\left( n \right) \le 1, \sum_{k = 1}^K {{\beta _k}\left( n \right)}  = 1,\\
\label{cons_split_point}
&\;\;\;\;\;\;\;\;\;\;\;\;\;\;\;\;\;\;\;\;\;\;\;\;\;
u\left( n \right) \in {\cal U},
\end{align} 
\end{subequations}
where $\lambda>0$ is the weight of the energy term and ${\cal U}$ is the set of possible values of the split point.
Besides, $\alpha_{\min}\in[0, 1/K]$ and $\beta_{\min}\in[0, 1/K]$ are the limits of the transmitting power fraction and the bandwidth fraction, respectively. 
To keep $\tau\left( n \right)$ away from its upper bound $\tau_{\max}\left( n \right)$ in \eqref{Latency_SFLPS_upper_bound}, we propose CPSFL in the next section to enhance vanilla SFL-PS.

\section{Communication-Pipelined SFL}
\label{secCPSFL}
\begin{figure}[t]
\centering
\includegraphics[width=0.99\linewidth]{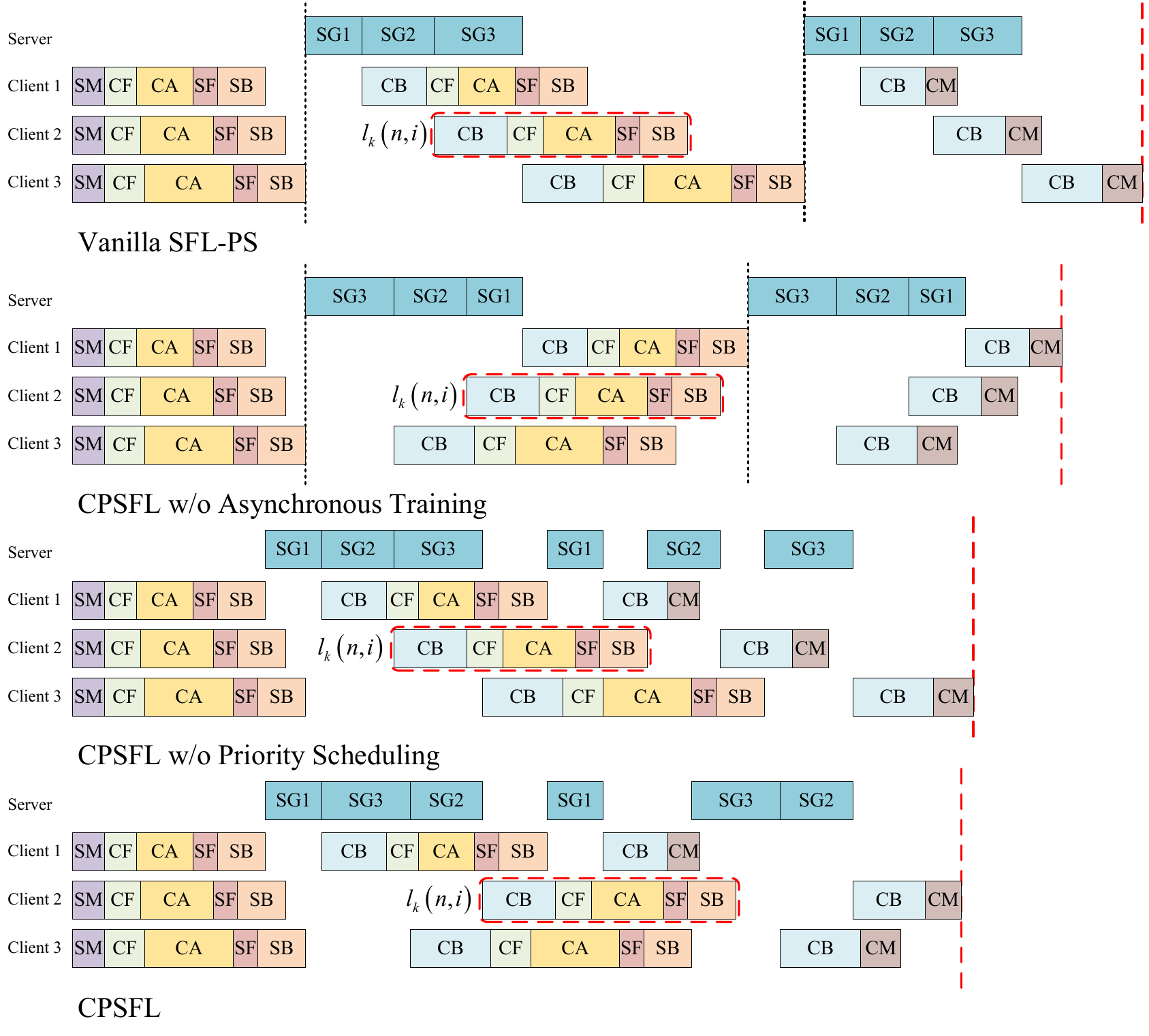}
\caption{The timeline of one training round (consisting of two local iterations) for three clients when the proposed CPSFL and its ablation variants are applied. 
The notations SM, CF, CA, SF, SB, SG$k$, CB, and CM correspond to the eight steps in Fig. \ref{scenario}, respectively.}
\label{Time_diagram_CPSFL}
\vspace{-0.4cm}
\end{figure}
In this section, we propose the CPSFL, which enhances the vanilla SFL-PS through the downlink GT priority scheduling and intra-round asynchronous training.
Specifically, the priority scheduling allows clients who are likely to become stragglers to receive the gradients required for BP earlier. 
The asynchronous training reduces the idle downlink waiting time by enabling immediate GT upon server computation completion.
Fig. \ref{Time_diagram_CPSFL} illustrates the timelines of four paradigms: vanilla SFL-PS, CPSFL without the intra-round asynchronous training (CPSFL w/o AT), CPSFL without the downlink GT priority scheduling (CPSFL w/o PS), and CPSFL, showing that CPSFL achieves the shortest training latency for one round.

In the following, we first present the priority scheduling mechanism in CPSFL w/o AT and establish its optimality via theoretical analysis.
Next, we introduce CPSFL and prove the scheduling optimality under additional simplifying assumptions.
We then analyze the per-round latency of CPSFL and its ablation variants, and finally compare them against the other two paradigms: SFL-PP and PipeSFL.

\subsection{Downlink Gradient Transmission Priority Scheduling}

Due to client heterogeneity in channel conditions and computing capabilities, the order of downlink GT significantly impacts the per-iteration training latency.
Under the synchronous training setting, there are theoretically $K!$ possible scheduling orders with $K$ clients.
Thus, finding the optimal orders is NP-hard, and exhaustive search is prohibitively time-consuming.
To address this, we design a priority scheduling mechanism for downlink GT and prove its optimality.
This paradigm is called CPSFL w/o AT.

As shown in Algorithm \ref{alg_CPSFL_server}, upon completing the computing task for client $k$, the server obtain the gradient ${\bf{G}}_k(n,i)$ for client $k$ and inserts it into the priority queue along with its lag ${l_k}\left( {n,i} \right)$ from \eqref{lag_client_k}, which serves as the transmission priority.
Consequently, clients with larger lags are assigned higher transmission priority and can start BP earlier.
When $I=0$, the term $\tau_{CB,k}\left(n,0\right)$ in \eqref{lag_client_k} can be replaced by $\varsigma{\tau_{CF,k}}\left( {n,1} \right)$ with a constant $\varsigma>0$.

\begin{algorithm}[t] %
\caption{CPSFL (Two Server-Side Improvements)}
\label{alg_CPSFL_server}
\begin{algorithmic}[1]
\STATE \textbf{Procedure}: Downlink Gradient Transmission Priority Scheduling
\IF {the server completes the server-side model FP and BP for client $k$ and obtains the gradient ${\bf{G}}_k(n,i)$}
\STATE The server adds ${\bf{G}}_k(n,i)$ to the priority queue with the lag ${l_k}\left( {n,i} \right)$ in \eqref{lag_client_k} as its priority.
\ENDIF

\STATE \textbf{Procedure}: Intra-Round Asynchronous Training
\IF {the server is not transmitting gradients and the priority queue is not empty}
\STATE The server retrieves the gradient ${\bf{G}}_k(n,i)$ of the highest-priority client $k$ from the priority queue.
\STATE The server transmits ${\bf{G}}_k(n,i)$ to client $k$.
\STATE The server removes ${\bf{G}}_k(n,i)$ from the priority queue. 
\ENDIF
\end{algorithmic}
\end{algorithm}

\subsubsection{Optimality analysis}
\label{Analysis_synchronous}
We first analyze the timeline when the server employs the priority scheduling, based on the following simplifying assumption.
\begin{asp}
\label{assumption_channel_constant_round}
The wireless channel remains constant within each training round.
Thus, we can omit the round index $n$ and local iteration index $i$ for brevity.
Without loss of generality, we reindex the $K$ clients in non-decreasing order of their lags, i.e., $l_1 \leq l_2 \leq \!\cdots\! \leq l_K$, so that their transmission priorities are client 1 $<$ client 2 $<\!\cdots\!<$ client $K$.
\end{asp}

For CPSFL w/o AT, we define the per-iteration training latency as the time interval between two consecutive instants when the server finishes the computing tasks of all clients, which is illustrated in Fig. \ref{Time_diagram_CPSFL} as the gap between two vertical black dashed lines.
This latency is given by
\begin{align}
\label{time_per_iteration}
T_{\rm{iter}}=\max_k{\left\{\sum_{j=k}^{K}{\tau_{SG,j}}+l_k\right\}}.
\end{align}

Then, the following theorem shows the optimality of the proposed GT priority scheduling.
\begin{thm}
\label{optimal_schedule_synchronous}
The optimal strategy to minimize the per-iteration training latency $T_{\rm{iter}}$ is to prioritize the GT task of clients with larger lag, i.e., schedule the GT task of client $m$ before client $k$ if and only if $l_m\geq l_k$, $\forall m,k\in{\cal K}$.
\end{thm}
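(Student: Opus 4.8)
The plan is to prove this by an adjacency-based exchange argument, the standard technique for single-machine sequencing problems. First I would generalize the per-iteration latency expression \eqref{time_per_iteration}, which is written only for the prioritized order, to an arbitrary transmission order. Let $\pi$ be a permutation with $\pi(1)$ transmitted first, so that under Assumption \ref{assumption_channel_constant_round} the sequential downlink transmissions run back-to-back from the common instant $t_0$ at which the server has finished all server-side computations of the previous iteration. Then client $\pi(m)$ receives its gradient at the cumulative time $t_0+\sum_{r=1}^{m}\tau_{SG,\pi(r)}$, and, adding the lag $l_{\pi(m)}$ that carries it through client-side BP, the next FP, smashed-data transmission, and server-side computation, its next completion instant is $t_0+\sum_{r=1}^{m}\tau_{SG,\pi(r)}+l_{\pi(m)}$. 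Hence $T_{\rm{iter}}(\pi)=\max_m\{\sum_{r=1}^{m}\tau_{SG,\pi(r)}+l_{\pi(m)}\}$, which reduces to \eqref{time_per_iteration} for the largest-lag-first order. The goal is then to show this expression is minimized over all $\pi$ by scheduling clients in non-increasing order of $l_k$.

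Next I would take any order and locate an adjacent pair of positions $m,m+1$ occupied by clients $a,b$ with $l_a<l_b$, i.e. a violation of the claimed rule (smaller lag scheduled before larger lag), and show that swapping them does not increase $T_{\rm{iter}}$. Writing $P=\sum_{r=1}^{m-1}\tau_{SG,\pi(r)}$ for the cumulative transmission time before position $m$ and abbreviating $\tau_a=\tau_{SG,a}$, $\tau_b=\tau_{SG,b}$, the only contributions that change are those of $a$ and $b$, since the pair occupies the same two positions with the same total $\tau_a+\tau_b$, leaving every other client's completion instant untouched. Before the swap the pair contributes $\max\{P+\tau_a+l_a,\,P+\tau_a+\tau_b+l_b\}$; after the swap it contributes $\max\{P+\tau_b+l_b,\,P+\tau_a+\tau_b+l_a\}$. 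Both post-swap terms are dominated by the single pre-swap term $P+\tau_a+\tau_b+l_b$ — the first because $\tau_a\ge 0$, the second because $l_a<l_b$ — so the maximum over the pair cannot increase, and therefore neither can the overall maximum. Applying such transpositions repeatedly (a bubble sort toward decreasing lag) converts any schedule into the largest-lag-first schedule without ever increasing $T_{\rm{iter}}$, which establishes optimality; the weak inequalities also absorb the tie case $l_m=l_k$, which is why the theorem is stated with ``$\geq$''.

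I do not expect a serious obstacle, as the argument is self-contained and is essentially Jackson's EDD rule for minimizing maximum lateness with due dates $d_k=-l_k$. The two points that merit care are justifying that the overall maximum is governed solely by the swapped pair — namely verifying that the cumulative completion instants of every client outside positions $m,m+1$ are identical before and after the transposition — and confirming that the two post-swap contributions are each bounded by a single pre-swap contribution, so that no term-by-term comparison of the full maxima is needed. Making the reindexing of Assumption \ref{assumption_channel_constant_round} explicit at the start keeps the bookkeeping between the permutation index $m$ and the lag-ordered client index $k$ unambiguous throughout.
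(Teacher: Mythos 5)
Your proof is correct, and it rests on the same underlying technique as the paper's proof of Theorem \ref{optimal_schedule_synchronous} --- an exchange argument over GT orders --- but the key lemma and the direction of the argument are genuinely different. The paper reasons by contradiction starting \emph{from} the sorted schedule: after reindexing by Assumption \ref{assumption_channel_constant_round}, it supposes some (not necessarily adjacent) pair $m,k$ with $l_m\geq l_k$ can be swapped in the descending-lag order to reduce $T_{\rm{iter}}$, and refutes this by showing the post-swap term $t_m^\ast+l_m$ (with $t_m^\ast=t_k$) dominates every pre-swap term indexed in $[k,m]$; it then closes by asserting that any ordering is reachable from the descending-lag order by a sequence of such swaps, none of which reduces latency. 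Your argument runs the other way: you bubble-sort an \emph{arbitrary} schedule toward the descending-lag order via adjacent transpositions, showing each swap cannot increase the maximum because both post-swap contributions, $P+\tau_b+l_b$ and $P+\tau_a+\tau_b+l_a$, are dominated by the single pre-swap term $P+\tau_a+\tau_b+l_b$ (using $\tau_a\geq 0$ and $l_a<l_b$, respectively), while all other completion instants are untouched. Your direction buys rigor at the closure step: the paper's swap lemma is proven only for swaps applied to the descending-lag configuration (its inequalities for the intermediate clients $q$ exploit that structure), so its final appeal to a \emph{sequence} of swaps --- whose later members act on unsorted intermediate orders --- is not literally covered by the lemma as proved; your adjacent-swap lemma needs no hypothesis on the ambient order beyond the existence of one violating pair, so the induction ``any schedule $\to$ sorted schedule without increasing $T_{\rm{iter}}$'' is airtight. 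What the paper's version buys is brevity and a statement that directly mirrors the theorem's pairwise phrasing. Your preliminary step of generalizing \eqref{time_per_iteration} to an arbitrary permutation $\pi$, and the identification with Jackson's EDD rule with due dates $-l_k$, are both sound and make explicit what the paper leaves implicit.
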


\begin{proof}
See Appendix \ref{sec_Proof_theorem1}.
\end{proof}

\subsection{Intra-Round Asynchronous Training}
\label{sec_Intra_Round_Asyn_Train}
In CPSFL w/o AT, the server's downlink communication resources remain idle between the completion of the computing task of one client and the completion of the computing tasks of all clients.
To improve resource utilization, the proposed CPSFL adopts the intra-round asynchronous training.  
As shown in Algorithm \ref{alg_CPSFL_server}, whenever the server is not transmitting gradients, it immediately retrieves the highest-priority gradient from the priority queue, starts transmission, and then removes this gradient and its priority from the queue.  
This allows the server to start downlink GT without waiting for the computing tasks of all clients to complete.

Notably, the gradients in the priority queue may originate from different local iteration counts across clients.  
For example, client $j$ may expect to receive a gradient of iteration 2, while client $k$ expects one of iteration 3.  
In this study, the transmission priority of each gradient is independent of the number of local iterations completed by the corresponding client.

\subsubsection{Optimality analysis}
To simplify the per-round latency analysis, we adopt Assumption \ref{assumption_channel_constant_round} and two other assumptions.
\begin{asp}
\label{assumption_latency_CM_negligible}
The uplink transmission latency of the client-side TPs, ${\tau _{CM,k}}$, is negligible.
\end{asp}
Assumption \ref{assumption_latency_CM_negligible} approximately holds in practice for two reasons.  
First, LoRA and model splitting significantly reduce the number of client-side TPs, making the data size $\Gamma_M(u)$ in \eqref{t_CM} much smaller than $B\Gamma_A(u)$ in \eqref{t_CA}, so that $\tau_{CM,k} \ll \tau_{CA,k}$.  
Second, when the number of local iterations $I$ is large, the contribution of $\tau_{CM,k}$ to the total per-round latency becomes negligible.

\begin{asp}
\label{assumption_latency_SG_equal}
The GT latencies for all clients are equal, i.e., $\tau_{SG,k}=\tau_{SG}$, $\forall k$ (based on Assumption \ref{assumption_channel_constant_round}).
\end{asp}
Assumption \ref{assumption_latency_SG_equal} holds when the downlink communication rates between the clients and the server are identical.
Since the server uses all downlink resources (bandwidth and transmit power) for each GT, as shown in \eqref{rate_downlink_broadcast} and \eqref{t_SG}, this assumption is equivalent to assuming identical channel gains: $h_k = h$, $\forall k$.
Moreover, Assumption \ref{assumption_latency_SG_equal} is approximately valid in practice: due to limited client uplink bandwidth and transmit power, $\tau_{SG,k}$ is much smaller than $\tau_{CA,k}$, and the differences among $\tau_{SG,k}$ across clients is relatively small.

Under Assumptions \ref{assumption_channel_constant_round}–\ref{assumption_latency_SG_equal}, we establish the following optimality theorem for the proposed scheduling policy.
\begin{thm}
\label{optimal_schedule_asynchronous}
The optimal strategy to minimize the per-round training latency of CPSFL is to prioritize the GT task of clients with larger lag, i.e., schedule the GT task of client $m$ before client $k$ if and only if $l_m\geq l_k$, $\forall m,k\in {\cal K}$.
\end{thm}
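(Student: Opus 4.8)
The plan is to reduce the per-round makespan to a single-server sequencing problem and then establish optimality of the lag-priority rule by a pairwise interchange (adjacent transposition) argument that reuses the per-iteration structure from Theorem \ref{optimal_schedule_synchronous}. First I would make the asynchronous dynamics explicit. Under Assumptions \ref{assumption_channel_constant_round}--\ref{assumption_latency_SG_equal}, every GT occupies the full downlink for the same duration $\tau_{SG}$, so the downlink behaves as a single work-conserving server that processes $KI$ gradient-transmission jobs (client $k$ contributing $I$ jobs). Letting $C_k^{(i)}$ denote the completion instant of client $k$'s $i$-th GT and $A_k^{(i)}$ its arrival instant into the priority queue, the lag definition \eqref{lag_client_k} yields the recursion $A_k^{(i)}=C_k^{(i-1)}+l_k$ for $i\ge 2$, with $A_k^{(1)}$ fixed by the first-iteration ramp-up. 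Dropping the negligible $\tau_{CM,k}$ via Assumption \ref{assumption_latency_CM_negligible}, the per-round latency reduces (up to the common broadcast and aggregation terms) to $\max_k\{C_k^{(I)}+\tau_{CB,k}\}$, so it suffices to show that the lag-priority sequence minimizes this quantity.

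Next I would run the adjacent-interchange argument on the GT sequence. Consider any feasible schedule and suppose two consecutive GT slots serve a smaller-lag client $k$ immediately before a larger-lag client $m$ (so $l_m\ge l_k$) while both gradients are already queued. Because the two jobs have identical length $\tau_{SG}$ by Assumption \ref{assumption_latency_SG_equal}, swapping them leaves every other slot untouched and merely exchanges the two completion instants: client $m$ now completes one slot earlier, advancing its next arrival $A_m^{(\cdot)}=C_m^{(\cdot)}+l_m$, whereas client $k$ is delayed by one slot but carries the smaller lag $l_k$. The inequality to verify is that the one-slot delay imposed on $k$'s recirculating chain is dominated by the advance granted to the longer chain of $m$, so that $\max_k\{C_k^{(I)}+\tau_{CB,k}\}$ cannot increase. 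This is the discrete analogue of the comparison of $\sum_{j=k}^{K}\tau_{SG,j}+l_k$ that drives Theorem \ref{optimal_schedule_synchronous}, and I would mirror that computation here.

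I expect the recirculation to be the main obstacle: unlike the single-iteration synchronous setting of Theorem \ref{optimal_schedule_synchronous}, swapping two adjacent slots perturbs the later-iteration arrivals of both affected clients, and this perturbation can cascade through the work-conserving policy and reshuffle subsequent slots. To control it I would (i) restrict attention to swaps of adjacent slots whose jobs are simultaneously available, so that feasibility and the equal-length property keep the remainder of the schedule intact, and (ii) argue by induction on the number of out-of-order adjacent pairs that repeatedly applying such swaps terminates at the lag-priority schedule without ever increasing $\max_k\{C_k^{(I)}+\tau_{CB,k}\}$. As a fallback I would keep an inductive layering over the $I$ iterations that invokes Theorem \ref{optimal_schedule_synchronous} on each layer and then glues the layers using the no-idle property of the asynchronous policy; the gluing step---showing that the optimal within-layer order survives once the layers overlap---is precisely where the equal-$\tau_{SG}$ assumption becomes indispensable.
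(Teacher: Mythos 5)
There is a genuine gap at the heart of your interchange step. You first assert that swapping two adjacent, simultaneously available, equal-length GT jobs ``leaves every other slot untouched and merely exchanges the two completion instants,'' and then (correctly) concede a few sentences later that it does not: because each client's next arrival obeys $A_k^{(i)}=C_k^{(i-1)}+l_k$, advancing client $m$'s completion and delaying client $k$'s changes the arrival instants feeding the queue, so at later decision epochs the work-conserving server may pick different jobs, and the idle/busy pattern itself can shift. The swap is therefore not a local perturbation, and the quantity you must control, $\max_k\{C_k^{(I)}+\tau_{CB,k}\}$, has to be compared between two schedules whose entire suffixes differ. Your two remedies do not close this: restricting to swaps of simultaneously available jobs only guarantees feasibility of the swapped prefix, not invariance of the suffix; and the induction on the number of out-of-order adjacent pairs lacks the monotonicity lemma it needs, namely that the makespan does not increase under one swap \emph{including} its cascade --- which is exactly the statement you label ``the inequality to verify'' and never verify. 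The fallback layering-plus-gluing argument is likewise only named, with the gluing step acknowledged as open. So the proposal is a plausible framework, not a proof.

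For comparison, the paper's proof (Appendix \ref{sec_Proof_theorem2}) avoids slot-level surgery altogether: it swaps the \emph{static priorities} of clients $m$ and $k$, lets the asynchronous dynamics re-run in full, and compares per-client round-completion latencies $T_{CPSFL,j}$ in aggregate. The key device is the count $\varrho_j$ of GTs that client $j$ must wait for, with the approximation $T_{CPSFL,j}\approx\varrho_j\overline{\tau}_{SG,j}+Il_j$; Assumption \ref{assumption_latency_SG_equal} equates all per-transmission latencies, the swap preserves each client's number of higher-priority competitors, and the observation that a smaller-lag client re-enqueues more often gives $\varrho_m^\ast\ge\varrho_k$ and $\varrho_j^\ast\ge\varrho_j$ for the intermediate clients, whence $T_{CPSFL}^\ast\ge T_{CPSFL}$ and the contradiction follows. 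This counting-over-the-whole-round comparison is precisely the piece your plan is missing: an argument that survives the recirculation cascade rather than assuming it away. If you want to salvage your slot-level route, you would need a coupling or dominance lemma showing that under equal $\tau_{SG}$ the post-swap completion-time vector is componentwise no smaller than the pre-swap one; as written, neither your adjacent-interchange induction nor your layer-gluing supplies it.
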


\begin{proof}
See Appendix \ref{sec_Proof_theorem2}.
\end{proof}

\subsection{Per-Round Training Latency Analysis}
\label{latency_analysis}
In the analysis, we omit the broadcast latency of the global client-side TPs ${\tau _{SM}}\left( n \right)$, as it is identical for all clients and paradigms.
Besides, we adopt Assumptions \ref{assumption_channel_constant_round} and \ref{assumption_latency_CM_negligible} for brevity.

\subsubsection{Per-round latency of the vanilla SFL-PS}
We denote the per-round latency of the vanilla SFL-PS by $\tau_1$.
From \eqref{Latency_SFLPS_upper_bound}, we have $\tau_1\le\tau_{\max}$.
The minimum value of $\tau_1$ is achieved when CPSFL w/o AT is applied, i.e., the GT follows the priority scheduling in Theorem \ref{optimal_schedule_synchronous}.
We denote the per-round latency of CPSFL w/o AT by $\tau_2$, so $\tau_1\ge \tau_2$.

\subsubsection{Per-round latency of CPSFL w/o AT}
The per-round latency of CPSFL w/o AT $\tau_2$ is given by 
\begin{align}
\label{Latency_CPSFL_wo_AT}
\tau_2=&\max_k\left\{\tau_{CF,k}+\tau_{CA,k}+\tau_{S,k}\right\}+\left(I-1\right) \cdot \nonumber\\
&\max_k\left\{\sum_{j=k}^{K}\tau_{SG,j}+l_k\right\}+\max_k\left\{\sum_{j=k}^{K}\tau_{SG,j}+\tau_{CB,k}\right\}.
\end{align}
Thus, $\tau_2\geq I\max_k\left\{\sum_{j=k}^{K}\tau_{SG,j}+l_k\right\}={\widehat{\tau}}_2$, where the equality holds when $\arg \max_k\left\{\tau_{CF,k}+\tau_{CA,k}+\tau_{S,k}\right\} = \arg \max_k\left\{\sum_{j=k}^{K}\tau_{SG,j}+\tau_{CB,k}\right\}$ or when $I$ is large.

\subsubsection{Per-round latency of CPSFL}
\label{Latency_CPSFL}
Due to the intra-round asynchronous training, the per-round latency of CPSFL, denoted by $\tau_{\rm{CPSFL}}$, lacks a closed-form expression.  
Instead, we characterize its bounds.
As described in Section \ref{sec_Intra_Round_Asyn_Train}, under asynchronous training, a lower-priority client's GT may start earlier than a higher-priority one's, provided no higher-priority GT task is yet enqueued.
When each client must wait for all higher-priority GTs to complete before its GT begins in every local iteration, $\tau_{\rm{CPSFL}}$ achieves its upper bound, i.e.,
\begin{align}
\label{Latency_CPSFL_max}
\tau_{\rm{CPSFL}}\le \tau_2.
\end{align}
In summary, we have $\tau_{\rm{CPSFL}}\leq \tau_2\leq \tau_1$.

The lower bound is achieved when the downlink GT fully overlaps all other latencies, except for the latency of client 1, the client with the smallest lag, at the start and end of each round.
\begin{align}
\label{Latency_CPSFL_min}
\tau_{\rm{CPSFL}}\!\geq\!\tau_{CF,1}\!+\tau_{CA,1}\!+\tau_{S,1}\!+\sum_{i=1}^{I}\sum_{j=1}^{K}{\tau_{SG,j}}\!+\tau_{CB,1}.\!
\end{align}

\subsection{Comparison of the Per-Round Training Latency of CPSFL, PipeSFL and SFL-PP}
\label{latency_Comparison}
In this section, we compare the approximate upper bound of the per-round latency of CPSFL, ${\widehat{\tau}}_2$, with that of PipeSFL, ${\widehat{\tau}}_3$, and the per-round latency of SFL-PP, $\tau_{\rm{PP}}$, to characterize the conditions under which either CPSFL or PipeSFL outperforms the others.
Firstly, we redefine the latency notation for the relevant steps of SFL-PP and PipeSFL since they utilize the server's computing and communication resources differently from the proposed CPSFL, as shown in Table \ref{table_SFL_paradigm}.  
The downlink rate from the server to client $k$ in round $n$ is given by
\begin{align}
\label{rate_downlink}
{R_{D,k}^\prime}&\left( n,{h_k}\left( s \right) \right) = {\beta _k}\left( n \right) W_D\cdot \nonumber\\
&{\log _2}\left( 1 + {\rho _k}\left( n \right)P_S{h_k}\left( s \right)/ \left({\beta _k}\left( n \right) W_D{N_0} \right)\right),
\end{align}
where ${\rho _k}\left( n \right)$ is the fraction of the server transmit power allocated to the client $k$. 
Its constraint is given by
\begin{align}
\label{cons_power}
{\rho_{\min}} \le {\rho _k}\left( n \right) \le 1, \sum_{k = 1}^K {{\rho _k}\left( n \right)}  = 1.
\end{align}
Then, the average rate for the downlink GT ${{\overline R}_{D,k}^\prime}\left( n,{t_B},{t_E} \right)$ can be obtained by replacing ${R_{U,k}}\left( n,{h_k}\left( s \right) \right)$ with ${R_{D,k}^\prime}\left( n,{h_k}\left( s \right) \right)$ in \eqref{rate_average}.
The GT latency for client $k$ $\tau_{SG,k}^\prime\left( {n,i} \right)$ can be obtained by replacing ${{\overline R}_{D,k}}\left( n,{t_B},{t_E} \right)$ with ${{\overline R}_{D,k}^\prime}\left( n,{t_B},{t_E} \right)$ in \eqref{t_SG}.
To avoid cumbersome and difficult comparisons, we adopt Assumptions \ref{assumption_channel_constant_round} and \ref{assumption_latency_CM_negligible} in the following.
Obviously, we have $\tau_{SG,k}<\tau_{SG,k}^\prime$.

The per-round latency of SFL-PP can be expressed as
\begin{align}
\label{latency_SFLPP_simple}
\tau_{\rm{PP}}=I\max_k\!\left\{\tau_{CF,k}\!+\tau_{CA,k}\!+\tau_{S,k}\!+\tau_{SG,k}^\prime\!+\tau_{CB,k}\right\}.
\end{align}

In PipeSFL, the lag of client $k$ is given by
\begin{align}
\label{lag_client_k_PipeSFL}
l_k^\prime=\tau_{CF,k}+\tau_{CA,k}+\tau_{SG,k}^\prime+\tau_{CB,k}.
\end{align}
Then, we reindex the $K$ clients by their lags, i.e., $l_1^\prime \leq l_2^\prime \leq \cdots \leq l_K^\prime$\footnote{Note that the index $k$ in $\tau_2$ and that in $\tau_3$ may refer to different clients, since $k$ in $\tau_2$ is indexed according to $l_k$ in \eqref{lag_client_k}, whereas $k$ in $\tau_3$ is indexed according to $l_k^\prime$. }.
Moreover, the server-side computing latency of client $k$ is denoted as $\tau_{S,k}^\prime$, which is identical for all clients since the server allocates all its computing resources to the current task and all clients share the same split point. 
For brevity, we denote it as $\tau_S^\prime$.
Obviously, we have $\tau_S^\prime<\tau_{S,k}$.
Similarly to $\tau_2$ in \eqref{Latency_CPSFL_wo_AT} and \eqref{Latency_CPSFL_max}, the upper bound of per-round latency of PipeSFL can be expressed as
\begin{align}
\label{Latency_PipeSFL_wo_AT}
\tau_3=&\max_k\left\{\tau_{CF,k}+\tau_{CA,k}\right\}+\left(I-1\right)\max_k\big\{\left(K-k+1\right)\tau_S^\prime\nonumber\\
&+l_k^\prime\big\}+\max_k\!\left\{\left(K-\!k+\!1\right)\tau_S^\prime+\tau_{SG,k}^\prime+\tau_{CB,k}\right\}.\!
\end{align}
Thus, we have $\tau_3\geq I\max_k\left\{\sum_{j=k}^{K}\tau_S^\prime+l_k^\prime\right\}={\widehat{\tau}}_3$, where the equality holds when $\arg \max_k\left\{\tau_{CF,k}+\tau_{CA,k}\right\} = \arg \max_k\left\{\sum_{j=k}^{K}\tau_S^\prime+\tau_{SG,k}^\prime+\tau_{CB,k}\right\}$ or when $I$ is large.

Next, we compare the latencies under two extreme cases to derive intuitive insights.
\subsubsection{Case 1: negligible computing latency}
When computing latency is negligible, i.e., $\tau_{CF,k}, \tau_{S,k}, \tau_{CB,k}, \tau_S^\prime \rightarrow 0$, we compare the latency of paradigms with parallel GT, $\tau_{\rm{PP}}$ and ${\widehat{\tau}}_3$, against that of the paradigm with sequential GT, ${\widehat{\tau}}_2$.
In this case, ${\widehat{\tau}}_2=I\max_k\left\{\sum_{j=k}^{K}\tau_{SG,j}+\tau_{CA,k}\right\}$ and $\tau_{\rm{PP}}={\widehat{\tau}}_3=I\max_k\left\{\tau_{CA,k}+\tau_{SG,k}^\prime\right\}$.
Thus, if 
\begin{align}
\label{sufficient_negligible_computing}
\max_k\left\{\sum_{j=k}^{K}\tau_{SG,j}+\tau_{CA,k}\right\}\leq\max_k\left\{\tau_{SG,k}^\prime+\tau_{CA,k}\right\},
\end{align}
then ${\widehat{\tau}}_2 \leq {\widehat{\tau}}_3=\tau_{\rm{PP}}$ and the proposed CPSFL is likely to achieve a lower per-round latency.
Next, we present a scenario under which \eqref{sufficient_negligible_computing} holds.
Firstly, \eqref{sufficient_negligible_computing} is equivalent to 
\begin{align}
\label{sufficient_negligible_computing_equal}
\sum_{j=k}^{K}\tau_{SG,j}+\tau_{CA,k}\le\max_k\left\{\tau_{SG,k}^\prime+\tau_{CA,k}\right\}, \forall k.
\end{align}
If the downlink communication resources are equally allocated among clients, i.e., $\beta_k=\rho_k=1/K$, $\forall k$ in \eqref{rate_downlink}, then $K\tau_{SG,k}=\tau_{SG,k}^\prime$, $\forall k$.
Thus, $K\tau_{SG,k}+\tau_{CA,k}\leq \max_k\left\{K\tau_{SG,k}+\tau_{CA,k}\right\}=\max_k\left\{\tau_{SG,k}^\prime+\tau_{CA,k}\right\}$, $\forall k$.
Moreover, if the GT latency ordering is opposite to the lag ordering, i.e., $\tau_{SG,k}\geq\tau_{SG,k+1},\forall k\in\left[1,K-1\right]$ (a relaxed version of Assumption \ref{assumption_latency_SG_equal}), we have $\sum_{j=k}^{K}\tau_{SG,j}\le\left(K-k+1\right)\tau_{SG,k}\le K\tau_{SG,k}$, $\forall k$.
Under these two conditions, \eqref{sufficient_negligible_computing_equal} and \eqref{sufficient_negligible_computing} hold.

\subsubsection{Case 2: negligible communication latency}
When communication latency is negligible, i.e., $\tau_{CA,k},\tau_{SG,k},\tau_{SG,k}^\prime \rightarrow0$, we compare the latency of paradigms with parallel server computing, $\tau_{\rm{PP}}$ and ${\widehat{\tau}}_2$, against that of the paradigm with sequential server computing, ${\widehat{\tau}}_3$.
In this case, ${\widehat{\tau}}_3=I\max_k\left\{\sum_{j=k}^{K}\tau_S^\prime+\tau_{CF,k}+\tau_{CB,k}\right\}$ and $\tau_{\rm{PP}}={\widehat{\tau}}_2=I\max_k\left\{\tau_{CF,k}+\tau_{S,k}+\tau_{CB,k}\right\}$.
Thus, if 
\begin{align}
\label{sufficient_negligible_communication}
\max_k&\left\{\left(K-k+1\right)\tau_S^\prime+\tau_{CF,k}+\tau_{CB,k}\right\}\leq \nonumber\\
&\max_k\left\{\tau_{S,k}+\tau_{CF,k}+\tau_{CB,k}\right\},
\end{align}
then ${\widehat{\tau}}_3\leq {\widehat{\tau}}_2=\tau_{\rm{PP}}$ and the PipeSFL is likely to achieve a lower per-round latency.

Next, we present a scenario under which \eqref{sufficient_negligible_communication} holds.
Firstly, \eqref{sufficient_negligible_communication} is equivalent to $\left(K-k+1\right)\tau_S^\prime+\tau_{CF,k}+\tau_{CB,k}\leq \max_k\left\{\tau_{S,k}+\tau_{CF,k}+\tau_{CB,k}\right\}$, $\forall k$.
If the server computing frequency is equally allocated, i.e., $\alpha _k=1/K$, $\forall k$ in $\tau_{S,k}$ in \eqref{t_SFB}, then $K\tau_S^\prime=\tau_{S,k}$, $\forall k$.
Thus, $K\tau_S^\prime+\tau_{CF,k}+\tau_{CB,k}\le\max_k{\left\{K\tau_S^\prime+\tau_{CF,k}+\tau_{CB,k}\right\}}=\max_k{\left\{\tau_{S,k}+\tau_{CF,k}+\tau_{CB,k}\right\}}$, $\forall k$.
Then, since $\left(K-k+1\right)\tau_S^\prime\le K\tau_S^\prime$, $\forall k$, \eqref{sufficient_negligible_communication} holds.

\section{DRL-based SPS and CCRA for the CPSFL-enabled UAV networks}
\label{secproposedScheme}
In this section, we first motivate the use of DRL for decision-making.
We then formulate the problem as a partially observable Markov decision process (POMDP).  
Next, we describe the design of the BS agent, which incorporates an attention-based mechanism to extract features from historical UAV trajectories.  
Finally, we present the DRL-based SPS and CCRA scheme for the CPSFL-enabled UAV networks.

\subsection{Motivation of DRL-based Solution}
Problem \eqref{Problem_ori} presents three major challenges that render conventional optimization methods impractical:
\begin{enumerate}
\item Complexity: It is a mixed-integer non-linear program (MINLP) with both discrete and continuous variables, making it non-convex and NP-hard \cite{yu2025model}.
\item Analytical intractability: As discussed in Section \ref{Latency_CPSFL}, due to intra-round asynchrony, the per-round latency $\tau_{\text{CPSFL}}(n)$ lacks a closed-form expression in terms of the decision variables and channel strengths.
\item Uncertainty: At the start of each round, the BS has no access to future UAV trajectories or channel realizations.
\end{enumerate}
Furthermore, the problem exhibits temporal correlation: the UAVs' initial locations in the next round depend on the current round's decisions, creating a sequential decision-making process. 
These motivate the use of DRL, a technique naturally suited to optimizing policies in complex, uncertain, and temporally correlated environments.
Due to the environmental uncertainty, the agent only has partial observability.
Thus, we formulate the problem as a POMDP.
\subsection{POMDP Formulation and BS Agent Design}
A POMDP model can be described with a tuple $\left\langle {{\cal S},\Omega, {\cal A},{\cal P}, r,{\cal B},{\gamma}} \right\rangle$.
Specifically, ${\cal S}$ is the state space,
${\bf{o}} \in {\Omega}$ is the observation,
${\bf{a}} \in {\cal A}$ is the action,
$P\left( {{\bf{s}}^{\prime}|{\bf{s}},{\bf{a}}} \right) \in {\cal P}$ is the probabilistic transition function,
$r$ is the reward function, and
$\gamma\in(0,1)$ is the discount factor.
Denote the policy for agent as $\pi:{\Omega}\times{\cal{A}}\to[0,1]$.
The expected discounted cumulative reward for agent is $J\left( {{\pi }} \right) = {{\mathbb{E}}_{{\bf{s}} \left( 0 \right),{{\bf{a}}}\left( 0 \right),...}}\left[ {\sum_{t = 0}^\infty  {\gamma^t{r}\left( t \right)} } \right]$, where ${\bf{a}}\left( t \right) \!\sim\! {\pi }\left( { \cdot |{\bf{o}}\left( t \right)} \right)$ and ${\bf{s}}\left( t\!+\!1 \right)\!\sim\! P\left( { \cdot |{\bf{s}}\left( t \right),{{\bf{a}}\left( t \right)}} \right)$.
The POMDP aims to find an optimal policy ${\pi}^*$ that maximizes $J\left( {{\pi }} \right)$, i.e.,
\begin{align}
\label{POMDP}
{{\pi} ^{*}}= {{\rm{argmax}}}_{\pi} J\left( {\pi} \right).
\end{align}
To find ${\pi}^*$, the DRL algorithms can be applied.

We designate the BS as the agent for deciding the variables listed in Section \ref{SFL_PS}.
Its observation space, action space, and reward function are detailed as follows.

\subsubsection{Attention-based UAV trajectory feature extraction}
To make informed decisions, the BS agent needs to infer future UAV trajectory features and analyze how past trajectories affect the latency and energy consumption.  
Thus, the agent leverages historical trajectory data from the previous training round.
However, the number of time slots per round varies, posing a challenge for fixed-dimension neural networks \cite{sun2025generative}.  
To address this, we employ an attention mechanism with positional encoding to extract fixed-dimensional representations.

Given a query ${\bf{q}} \in {\mathbb{R}}^{D_Q \times 1}$ and $M$ key-value pairs ${\bf{K}}=\left[{\bf{k}}_1,\ldots,{\bf{k}}_M\right]\in{\mathbb{R}}^{D_K\times M}$, ${\bf{V}}=\left[{\bf{v}}_1,\ldots,{\bf{v}}_M\right]\in{\mathbb{R}}^{D_V\times M}$, the attention mechanism produces a feature vector ${\bf{f}}_k \in {\mathbb{R}}^{H \times 1}$.  
Specifically, $\bf{q}$, $\bf{K}$, and $\bf{V}$ are first projected via the learnable weights ${\bf{W}}_Q \in {\mathbb{R}}^{D_S \times D_Q}$, ${\bf{W}}_K \in {\mathbb{R}}^{D_S \times D_K}$, and ${\bf{W}}_V \in {\mathbb{R}}^{H \times D_V}$, respectively. 
To preserve the temporal order, sinusoidal positional encodings are added to the projected keys and values, as well as to the query at its corresponding time step.
The output feature is then obtained via the scaled dot-product attention \cite{vaswani2017attention}
\begin{align}
\label{Attention}
{\bf{f}} = &\left({\bf{W}}_V{\bf{V}} + {\bf{P}}_V\right) \cdot \nonumber\\
&{\rm{softmax}}\left( \frac{1}{\sqrt{D_S}} \!\left( {\bf{W}}_K{\bf{K}} + {\bf{P}}_K \right)^\top \left( {\bf{W}}_Q{\bf{q}} + {\bf{p}}_Q \right) \right),\!
\end{align}
where ${\bf{P}}_V \in {\mathbb{R}}^{H \times M}$, ${\bf{P}}_K \in {\mathbb{R}}^{D_S \times M}$, and ${\bf{p}}_Q \in {\mathbb{R}}^{D_S \times 1}$ are the positional encodings.

For UAV $k$ at time slot $s$, we form a vector ${\boldsymbol{\chi}}_k(s) \in {\mathbb{R}}^{4 \times 1}$ by combining its 3D coordinates and distance to the BS, which is available to the BS at the start of each slot. 
Let ${\cal{S}}\left( n \right)$ denote the set of time slots in round $n$, with $M\left( n \right)=\left| {\cal{S}}\left( n \right)\right|$ varying across rounds.
The query is set to ${\bf{q}} = {\boldsymbol{\chi}}_k(s^{\prime})$, where $s^\prime$ is the last time slot in ${\cal{S}}\left(n-1 \right)$.
The key-value pairs are defined as ${\bf{k}}_m = {\bf{v}}_m = {\boldsymbol{\chi}}_k(s)$, $\forall s \in {\cal{S}}\left( {n - 1} \right)$.
Applying \eqref{Attention} yields a fixed-dimensional trajectory feature ${\bf{f}}_k\left( {n - 1} \right) \in {\mathbb{R}}^{H \times 1}$ for UAV $k$ over round $n-1$.

\subsubsection{Observation space}
The observation of the BS is designed as
\begin{align}
\label{State_Space_server}
{\bf{o}}(n) = \big[ &u\left( {n - 1} \right),\left[{\alpha _k}\left( n-1 \right), {\beta _k}\left( {n - 1} \right),{e_k}\left( {n - 1} \right)\right]_{k\in {\cal K}}, \nonumber\\
&\tau\left( n-1 \right), \left[{\boldsymbol{\chi}}_k(s)\right]_{k\in {\cal K}, s \in {\cal{S}}\left( {n - 1} \right)} \big].
\end{align}
To obtain ${\bf{o}}(n)$, the communication overhead is $K$ positive real numbers (PRNs) (i.e., $e_k\left( {n - 1} \right)$, $\forall k$), which is negligible compared with $\Gamma_M$ and $\Gamma_A$.
After obtaining ${\bf{f}}_k\left( {n - 1} \right)$, $\forall k$, we concatenate it with the other components of ${\bf{o}}(n)$ as input to the subsequent multi-layer perception (MLP).

\subsubsection{Action space}
The action space of the BS is designed as
\begin{align}
\label{Action_Space_server}
{\bf{a}}(n)= \left[ u(n), \widetilde{\boldsymbol{\alpha}}(n), \widetilde{\boldsymbol{\beta}}(n)\right],
\end{align}
where $\widetilde{\boldsymbol{\alpha}}=\left[{\widetilde{\alpha}}_1, \cdots,{\widetilde{\alpha}}_K\right]$, $\sum_{k= 1}^{K}{{{ \tilde{\alpha}}_k}\left(n\right)}= 1$, and ${{ \tilde{\alpha}}_k}\left(n\right)>0$.
$\widetilde{\boldsymbol{\beta}}$ is defined analogously.
To satisfy \eqref{cons_power_ratio}, we obtain ${\alpha _k}\left( n \right)$ by the following linear scaling
\begin{align}
\label{Action_trans}
{\boldsymbol{\alpha}}(n)= 
\begin{cases} 
    \tilde{\boldsymbol{\alpha}}(n),\min\left(\tilde{\boldsymbol{\alpha}}(n)\right)\geq\alpha_{\min}, \\ 
    A\left(\tilde{\boldsymbol{\alpha}}(n)-\min\left(\tilde{\boldsymbol{\alpha}}(n)\right)\right)+\alpha_{\min}, {\rm{else}},
\end{cases}
\end{align}
where $A=(1-K\alpha_{\rm{min}})/\left(1-K\min\left(\widetilde{\boldsymbol{\alpha}}(n)\right)\right) \ge 0$. Besides, ${\beta _k}\left( n \right)$ can be obtained in a similar way to satisfy \eqref{cons_bandwidth_ratio}.

\subsubsection{Reward function}
The reward function of the BS is designed as
\begin{align}
r(n) = -\tau\left( n \right) - \lambda  {\max}_k \left\{ {{e_k}\left( n \right)} \right\}.
\end{align}

\subsection{The Overall DRL Scheme}
\addtolength{\topmargin}{+0.05in} %

\begin{algorithm}[t] %
\caption{DRL-based SPS and CCRA scheme for the CPSFL-enabled UAV networks}
\label{alg_DRL_SPS_CCRA_SFL_UAV}
\small
\begin{algorithmic}[1]
\STATE Initialize the whole model ${\bf{w}}$ with the LoRA modules and the batch size $B$.
\STATE {The BS initializes the policy network $\pi_{\boldsymbol{\vartheta}}$, the value network $V_{\boldsymbol{\varphi}}$, and the trajectory collector $\xi$ of size $B_m$.}
\FOR {round $n=0, \dots, N$} 
\STATE If $n\geq 1$, the BS obtains observation ${\bf{o}}(n)$. 
\STATE If $n\geq 2$, the BS stores the experience $\langle {\bf{o}}(n-1), {\bf{a}}(n-1), {r}(n-1), {\bf{o}}(n) \rangle$ into $\xi$.
\IF {$\xi$ is full}
\STATE The BS obtains all experiences $\langle {{{\bf{o}}_{j}},{{\bf{a}}_{j}},{r_{j}},{{\bf{o}}_{j + 1}}} \rangle$, $j = 1, \ldots, B_m$ in $\xi$ and updates $\pi_{\boldsymbol{\vartheta}}$ and $V_{\boldsymbol{\varphi}}$ by the PPO algorithm, and then clear $\xi$.
\ENDIF
\STATE The BS obtains the action ${\bf{a}}(n)$, decides the split point $u(n)$, the server computing frequency allocation ${\alpha _k}\left( n \right)$, $\forall k$, and the uplink bandwidth allocation ${\beta _k}\left( n \right)$, $\forall k$.
\STATE The BS splits ${\bf{w}}(n-1)$ into ${\bf{w}}_c(n-1)$ and ${\bf{w}}_s(n-1)$, and sends $\Delta {\bf{w}}_c(n-1)$, $u(n)$, and ${\beta _k}\left( n \right)$ to UAV $k$, $\forall k$.

\STATE The UAVs and BS perform $I$ local iterations by the CPSFL paradigm in Algorithm \ref{alg_CPSFL_server}.

\STATE After receiving $\Delta {\bf{w}}_{c,k}(n,I)$, $\forall k$, the BS calculates their weighted average to obtain $\Delta {\bf{w}}_{c}(n)$.
Concurrently, the BS aggregates the $\Delta {\bf{w}}_{s,k}(n, I)$, $\forall k$ to obtain $\Delta {\bf{w}}_s(n)$.

\STATE UAV $k$ send $e_k\left( n \right)$ to the BS, $\forall k$.
\STATE The BS receives reward $r(n)$.
\ENDFOR

\RETURN Trained whole TPs consisting of $\Delta {\bf{w}}_c$ and $\Delta {\bf{w}}_s$.
\end{algorithmic}
\end{algorithm}
The proposed DRL-based SPS and CCRA scheme for the CPSFL-enabled UAV networks is summarized in Algorithm \ref{alg_DRL_SPS_CCRA_SFL_UAV}.
The BS agent is equipped with the proximal policy optimization (PPO) algorithm \cite{schulman2017proximal}.

\section{Simulation Results}
\label{sec_Simulation_Results}
In this section, we present the simulation results to demonstrate the performance improvements brought by the proposed CPSFL in Algorithm \ref{alg_CPSFL_server} for the UAV network and the proposed DRL-based SPS and CCRA scheme in Algorithm \ref{alg_DRL_SPS_CCRA_SFL_UAV}.
\subsection{Benchmarks}
\label{simulation_benchmarks}
We compare CPSFL against the following SFL paradigms and ablation variants:
\begin{itemize}
\item CPSFL w/o AT: CPSFL with intra-round synchronous training (see Fig. \ref{Time_diagram_CPSFL}): the server starts GT only after completing server-side computing for all clients.
\item CPSFL w/o PS: CPSFL with FCFS scheduling instead of priority scheduling.
\item PipeSFL \cite{gao2024pipesfl}: A state-of-the-art SFL-SP variant featuring server-side computing priority scheduling and intra-round asynchronous training\footnote{For fair comparison, we adopt PipeSFL-V2, where each client's server-side FP is immediately followed by its BP.}.
\item PipeSFL w/o AT: PipeSFL with intra-round synchronous training: the server starts computing only after receiving smashed data from all clients.
\item PipeSFL w/o PS: PipeSFL with FCFS scheduling instead of priority scheduling.
\item SFL-PP: All SFL steps of different clients is executed in parallel (see Fig. \ref{Time_diagram_SFL_paradigm}).
\end{itemize}
The latency of SFL-PP can be readily extended from \eqref{latency_SFLPP_simple}.  
In PipeSFL, upon receiving the smashed data, the server calculates the client lag using an extension of \eqref{lag_client_k_PipeSFL}, where we set ${\tau_{CB,k}}\left( {n,0} \right) = \varsigma{\tau_{CF,k}}\left( {n,1} \right)$ and estimate the GT latency by utilizing the current-slot channel information, i.e., ${\tau_{SG,k}^\prime}\left( {n,0} \right) = B{\Gamma _G}\left( {u\left( n \right)} \right)/ {R_{D,k}^\prime}\left( n,{h_k}\left( s_{CA,k,E}\left( {n,i} \right)  \right) \right)$, with $s_{CA,k,E}\left( {n,i} \right) = \left\lfloor {\left({t_{CA,k,B}}\left( {n,i} \right)+ {\tau _{CA,k}}\left( {n,i} \right)\right)/{\tau _0}} \right\rfloor$\footnote{In \cite{gao2024pipesfl}, these terms are set to zero, which may result in priority assignments that deviate from the optimal scheduling in Theorem 2 of \cite{gao2024pipesfl} for minimizing the per-round training latency.}.

\subsection{Simulation Parameters Setting}
We consider a UAV network with a BS and $K=10$ UAVs. 
The BS is located at [0,0,30]${\rm{m}}$.
The horizontal locations of the UAVs are within three concentric annular regions centered on the BS.
UAVs 1-3, 4-6, and 7-10 are located in the inner, middle, and outer rings, respectively.
The boundaries of these regions are 100${\rm{m}}$, 550${\rm{m}}$, 820${\rm{m}}$, and 1000${\rm{m}}$ away from the BS.
The UAVs' height is 20${\rm{m}}$, and their speed is maintained between 0.1${\rm{m/s}}$ and 4${\rm{m/s}}$.
The time slot length is set to $\tau_0=0.1{\rm{s}}$.
The center frequency is $f_c=2{\rm{GHz}}$.
The uplink bandwidth and the downlink bandwidth are $W_U=W_D=20{\rm{MHz}}$.
The noise PSD is set to $N_0=-114{\rm{dBm/MHz}}$.
The path loss $L_{{\rm{RMa}},k}$ in dB follows the RMa-AV LOS model in 3GPP TR 36.777 \cite{3GPP36777}.
The channel gain is $h_k(s)=10^{-L_{{\rm{RMa}},k}(s)/10}$.

\begin{table}[t]
\centering  
\caption{Amount of the FP computation, the client-side TPs, and the smashed data of Swin-L with LoRA}  
\label{table_FLOPs}  %
    \begin{threeparttable}
\begin{tabular}{cccccc}  
\toprule 
$u$ & \makecell[c]{$\Psi_{CF}$\\(GFLOPs)\tnote{1}}  & \makecell[c]{$\Psi_{SF}$\\(GFLOPs)\tnote{1}}  & \makecell[c]{$\Gamma_M$\\(KB)\tnote{2}}  & \makecell[c]{$\Gamma_A$\\(KB)\tnote{2}}  & \makecell[c]{Smashed\\Data Shape}  \\   
\midrule
1 & 6.18 & 64.10 & 192 & 2352 & [56, 56, 192] \\
2 & 12.50 & 57.78 & 612 & 1176 & [28, 28, 384] \\
3 & 64.19 & 6.09 & 7596 & 588 & [14, 14, 768] \\
4 & 70.28 & 0.001 & 9276 & 294 & [7, 7, 1536] \\
\bottomrule  %
\end{tabular}
    \begin{tablenotes}    %
        \footnotesize               %
        \item[1] It is calculated by fvcore with 1Mult-Adds $\approx$ 2FLOPs.
        \item[2] It is measured in float32 precision.
    \end{tablenotes}            %
    \end{threeparttable}
\vspace{-0.4cm}
\end{table}
The BS and UAVs collaborate to fine-tune a Swin-L model ($\approx$200M parameters)\footnote{CPSFL is also applicable to billion-scale models, e.g., SwinV2-G (3.0B).} \cite{liu2021swin} on the CIFAR-100 dataset, where the input size is 3*224*224 and $B=8$.
The LoRA modules of rank 8 are injected into all linear layers except the classification head.
All parameters except the LoRA modules and the classification head are frozen.
Split point $u$ allocates $u$ stages to the client and the remainder to the server.
Thus, ${\cal {U}}=\left\{ {1, \ldots , 4} \right\}$ in \eqref{cons_split_point}.
The values of ${\Psi _{CF}}\left( {u} \right)$, ${\Psi _{SF}}\left( {u} \right)$, $\Gamma _M\left( {u} \right)$, and $\Gamma _A\left( {u} \right)$ are shown in Table \ref{table_FLOPs}.
Besides, we set $\Gamma _G\!\left( {u} \right)=\Gamma _A\!\left( {u} \right)$, ${\Psi _{SB}}\!\left( {u} \right)=\varsigma{\Psi _{SF}}\!\left( {u} \right)$, ${\Psi _{CB}}\!\left( {u} \right)=\varsigma{\Psi _{CF}}\!\left( {u} \right)$, and $\varsigma=2$ since the computations required for BP are about twice that of FP \cite{qiang2024adaptive}.
The BS is equipped with a GeForce RTX 4080 for server-side computing, which provides a computing capacity of 780 AI TOPS ($\approx$195TFLOPS) and operates at a frequency of $f_S = 2.51$GHz \cite{nvidia2025gpucompare}. 
Accordingly, we set its computational intensity to $\kappa_S = 195 \times 10^{12} / f_S$ FLOPs/cycle.
In addition, we set $P_S=40{\rm{W}}$ and $\alpha_{\min} = \beta_{\min} =1/K/5$ in \eqref{cons_power_ratio} and \eqref{cons_bandwidth_ratio}.
The UAVs have heterogeneous transmit powers, with $p_k$ for UAV 1-10 set to 1, 0.4, 0.1, 1, 0.4, 0.1, 1, 0.4, 0.2, and 0.1 W, respectively.
Each UAV is equipped with a device whose performance is comparable to the Jetson Xavier NX for client-side computation, offering a computing capacity of 10TOPS ($\approx$2.5TFLOPS) and operating at a frequency of $f_k = 1$GHz \cite{nvidia2025jetsonxavier}.  
Accordingly, we set its computational density to $\kappa_k = 2.5 \times 10^{12} / f_k$ FLOPs/cycle and set the energy consumption coefficient to $\omega _k=16$W/(GHz)$^3$.
The weight of the energy term is set as $\lambda=4$.

For the BS agent, we set $D_S=8$ and $H=16$ in \eqref{Attention}.
Both $\pi_{\boldsymbol{\vartheta}}$ and $V_{\boldsymbol{\varphi}}$ have three hidden layers with 128, 64, and 32 neurons, respectively. 
Besides, the third hidden layer and the output layer of $\pi_{\boldsymbol{\vartheta}}$ include three branches for deciding ${\alpha _k}\left( n \right)$, ${\beta _k}\left( n \right)$, and $u\left( n \right)$, respectively. The softmax activation function is used in the outputs of the branches for $u\left( n \right)$ to normalize the probabilities.
The parameters of Algorithm \ref{alg_DRL_SPS_CCRA_SFL_UAV} are set as follows: 
discount factor is $\gamma=0.5$, 
learning rates are $\alpha_V=\alpha_{\pi}=3\times 10^{-4}$, and we set $B_m=12$.

\subsection{Performance Evaluation with Fixed Variables}
\label{simulation_fix_variable}
In this section, we perform comparisons under the following fixed resource allocation settings.
For CPSFL, we set $\alpha _k = \beta_k=1/K$, $\forall k$.
For PipeSFL, we set $\rho _k = \beta_k=1/K$, $\forall k$.
For CPSFL, we set $\alpha _k = \beta_k = \rho _k =1/K$, $\forall k$.
The split point is set as $u=2$ by default.
The number of local iteration is set as $I=3$ by default.

\begin{figure}[t]
\centering
\includegraphics[width=0.80\linewidth]{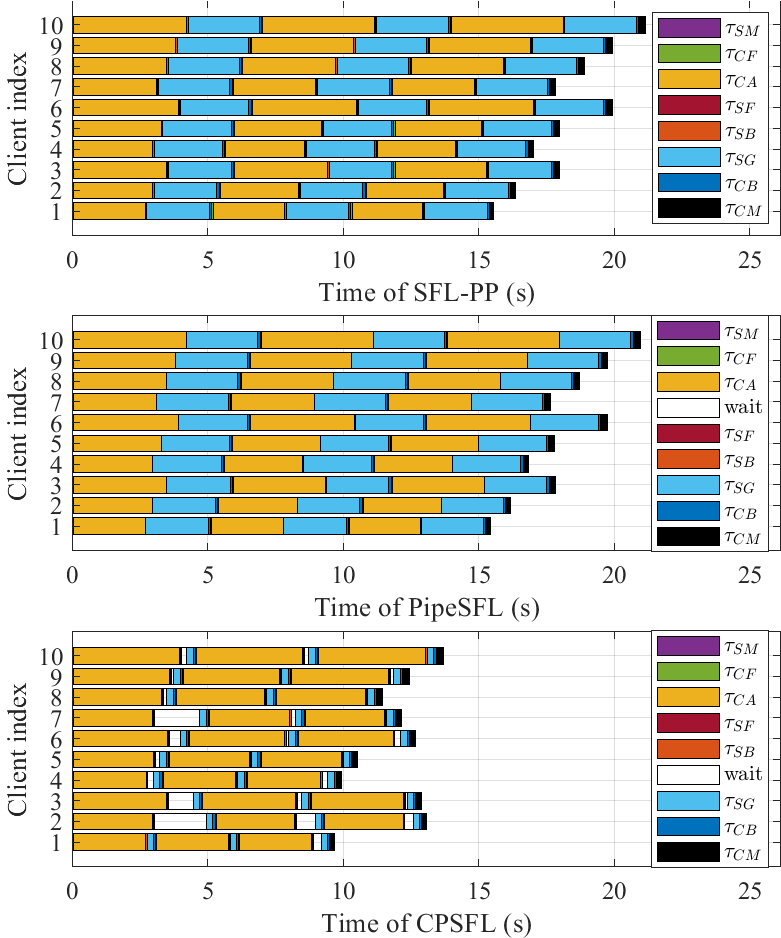}
\caption{Timeline of one training round (consisting of two local iterations) for ten clients when SFL-PP (top), PipeSFL (middle), or CPSFL (bottom) is applied.}
\label{Timeline_SFL_paradigm}
\vspace{-0.4cm}
\end{figure}

Fig. \ref{Timeline_SFL_paradigm} compares the timelines of SFL-PP, PipeSFL, and CPSFL, intuitively illustrating the source of CPSFL's performance gain.
In our simulation setting, the per-round training latency is dominated by the wireless communication, while the client and server computing latencies are relatively small. 
Consequently, although PipeSFL dedicates all server computing resources on the current task and incorporates priority scheduling and asynchronous training, its performance improvement is limited. 
In contrast, CPSFL, an advanced SFL-PS variant, focuses all downlink communication resources on the current GT, significantly reducing the GT latency for each client.
Moreover, CPSFL integrates priority scheduling and asynchronous training, further lowering overall latency.

\begin{figure}[t]
\centering
\includegraphics[width=0.7\linewidth]{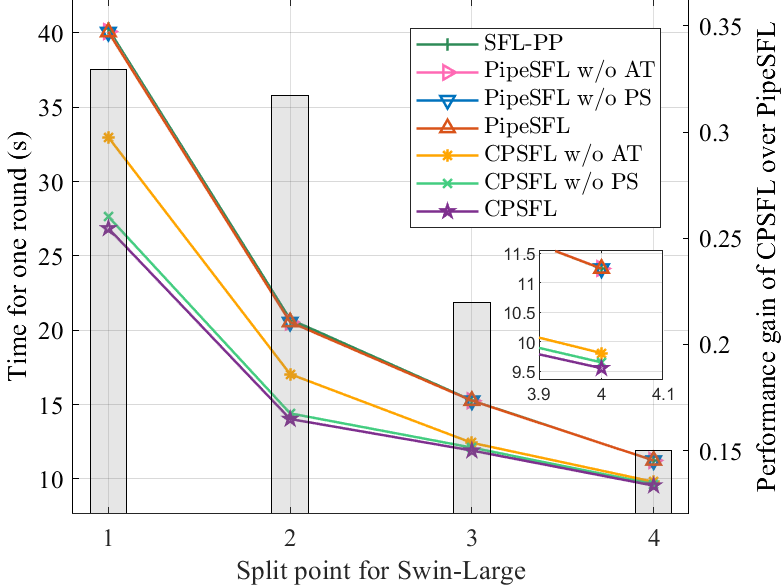}
\caption{Average per-round latency with different split points $u$ obtained by CPSFL and the benchmarks.}
\label{latency_round_fix_variable_split_points}
\vspace{-0.4cm}
\end{figure}

Fig. \ref{latency_round_fix_variable_split_points} shows the average per-round latency ${\overline \tau}\left( n \right)$ over 500 training rounds for different split points $u$.
First, increasing $u$ leads to a decrease in ${\overline \tau}\left( n \right)$ across all schemes since it reduces $\Gamma_A\left( u \right)$ (see Table \ref{table_FLOPs}), thereby decreasing $\tau_{CA,k}$, $\tau_{SG,k}$, and $\tau_{SG,k}^\prime$.
Moreover, CPSFL achieves the lowest latency. 
When $u = 2$, it reduces ${\overline \tau}\left( n \right)$ by over 30\% compared to PipeSFL.  
This gain is more pronounced for smaller $u$ since in this case, the communication latency constitutes a larger fraction of $\tau\left( n \right)$, and the benefit of time-division sequential GT becomes significant.

\begin{table}[t]
\centering  
\caption{Transmit power of five UAV clusters ($K$=12)}
\label{table_UAV_transmit_power}  %
\setlength{\tabcolsep}{3pt} %
    \begin{tabular}{cccc}
    \toprule 
        Clusters & $p_k$ (W) (inner ring) & $p_k$ (W) (middle ring) & $p_k$ (W) (outer ring) \\ 
        \midrule
        1 & 0.2, 0.2, 0.2, 0.2 & 0.4, 0.4, 0.4, 0.4 & 0.8, 0.8, 0.8, 0.8 \\ 
        2 & 0.4, 0.4, 0.4, 0.4 & 0.8, 0.8, 0.2, 0.2 & 0.4, 0.4, 0.4, 0.4 \\ 
        3 & 0.4, 0.4, 0.4, 0.4 & 0.8, 0.8, 0.8, 0.8 & 1.6, 0.2, 0.2, 0.2 \\ 
        4 & 0.8, 0.8, 0.8, 0.4 & 0.8, 0.8, 0.2, 0.2 & 0.4, 0.2, 0.2, 0.2 \\ 
        5 & 0.8, 0.8, 0.8, 0.4 & 0.8, 0.8, 0.4, 0.4 & 0.8, 0.1, 0.1, 0.1 \\ 
    \bottomrule
\end{tabular}
\vspace{-0.3cm}
\end{table}

\begin{figure}[t]
\centering
\includegraphics[width=0.7\linewidth]{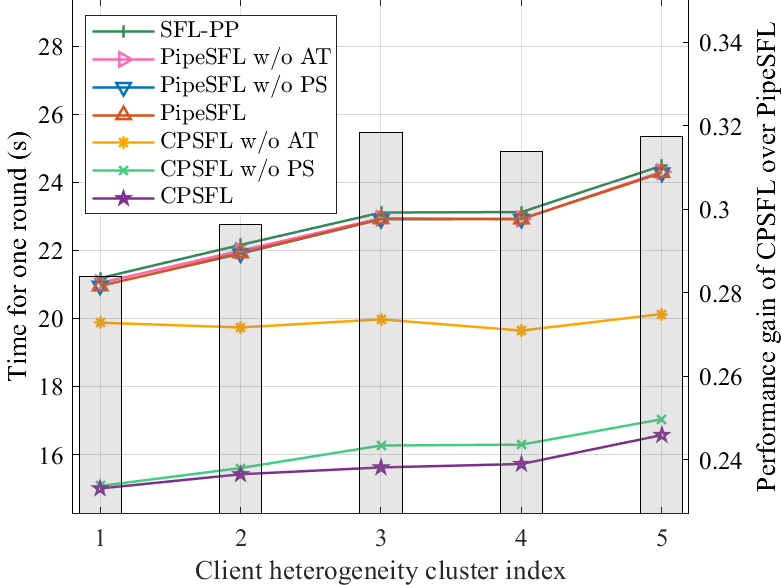}
\caption{Average per-round latency with different degrees of client heterogeneity obtained by CPSFL and the benchmarks.}
\label{latency_round_fix_variable_hetero}
\vspace{-0.4cm}
\end{figure}
In Table \ref{table_UAV_transmit_power}, we define five clusters of UAVs with approximately increasing heterogeneity in their uplink communication latencies $\tau_{CA,k}$, achieved by changing their transmit powers $p_k$ while keeping their average uplink communication rates $R_{U,k}$ nearly the same.
In cluster 1, $\tau_{CA,k}$ is nearly identical across clients, whereas in cluster 5, the last two UAVs exhibit $\tau_{CA,k}$ values nearly 1.5 times larger than those of the others.
Fig. \ref{latency_round_fix_variable_hetero} shows the average per-round latency over 500 training rounds for these clusters.
CPSFL achieves the lowest latency, and its performance gain over CPSFL w/o PS and PipeSFL basically grows with increasing latency heterogeneity.  
This is because CPSFL effectively hides the impact of clients with large lag $l_k$ in the GT latencies $\tau_{SG,k}$ via priority scheduling and intra-round asynchronous training.
In contrast, PipeSFL struggles to hide the latency of clients with large lag $l_k^\prime$ in the server computing latencies $\tau_S^\prime$ since $\tau_S^\prime$ is small in our wireless UAV setting.

\begin{figure}[t]
\centering
\includegraphics[width=0.7\linewidth]{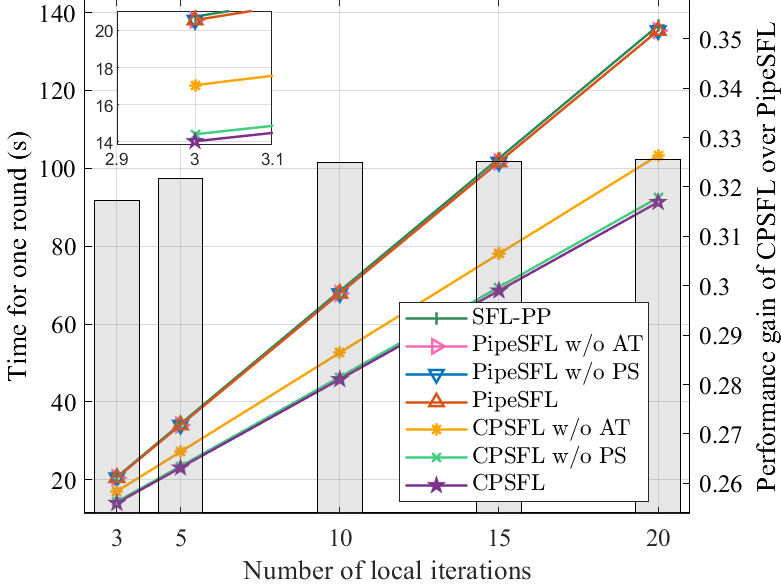}
\caption{Average per-round latency with different numbers of local iterations $I$ obtained by CPSFL and the benchmarks.}
\label{latency_round_fix_variable_iterations}
\vspace{-0.4cm}
\end{figure}

\begin{figure}[t]
\centering
\includegraphics[width=0.7\linewidth]{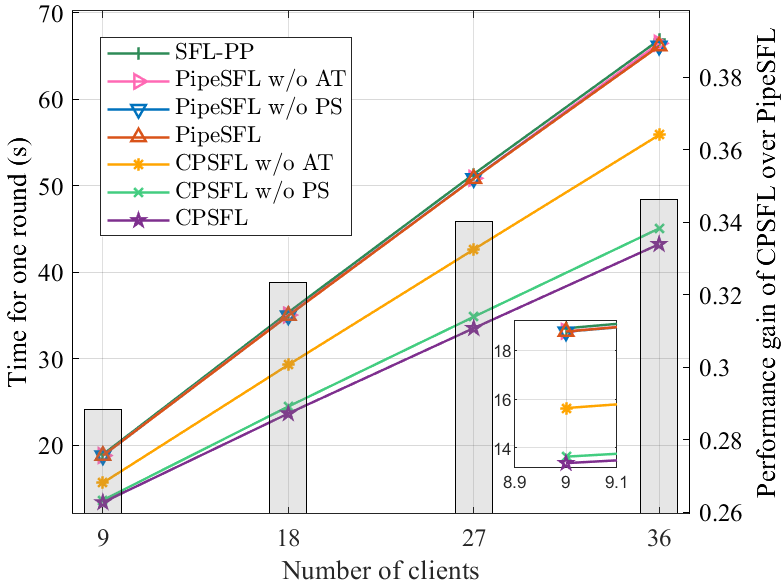}
\caption{Average per-round latency with different numbers of clients $K$ obtained by CPSFL and the benchmarks.}
\label{time_round_fix_variable_num_client}
\vspace{-0.4cm}
\end{figure}
Fig. \ref{latency_round_fix_variable_iterations} and Fig. \ref{time_round_fix_variable_num_client} show the average per-round latency ${\overline \tau}\left( n \right)$ over 500 training rounds with different numbers of local iterations $I$ and clients $K$, respectively.
For Fig. \ref{time_round_fix_variable_num_client}, there are $K/3$ clients in each of the three rings, with transmit powers $p_k$ set to 0.9W (inner), 0.3W (middle), and 0.1W (outer), respectively.  
In both figures, CPSFL achieves the lowest latency across all settings, with the performance gain over PipeSFL slightly increases with larger $I$ or $K$.
Specifically, in Fig. \ref{latency_round_fix_variable_iterations}, ${\overline \tau}\left( n \right)$ grows approximately linearly with $I$.
In Fig. \ref{time_round_fix_variable_num_client}, increasing $K$ raises ${\overline \tau}\left( n \right)$ since the total resource, including the bandwidth and server computing frequency, is fixed.

\vspace{-0.1cm}
\subsection{Performance Evaluation with Optimized Variables}
In this section, we compare the DRL-based scheme in Algorithm \ref{alg_DRL_SPS_CCRA_SFL_UAV} and the fixed variants.
The schemes are appended with the suffixes to indicate how the variables are determined.
\begin{itemize}
\item DRL (proposed): The SPS and CCRA are decided by Algorithm \ref{alg_DRL_SPS_CCRA_SFL_UAV}.
Besides, DRL(-) denotes a variant that replaces $\left[{\boldsymbol{\chi}}_k(s)\right]_{s \in {\cal{S}}\left( {n - 1} \right)}$, $\forall k$ in \eqref{State_Space_server} with the distances from the BS to all UAVs at the last time slot in ${\cal{S}}\left(n-1 \right)$.
Thus, DRL(-) does not involve the attention mechanism.
\item fixed RA: The SPS is decided by Algorithm \ref{alg_DRL_SPS_CCRA_SFL_UAV}, while the CCRA is fixed at $\alpha _k = \beta_k=1/K$, $\forall k$.
\item $u={\widetilde{u}}$: The CCRA is decided by Algorithm \ref{alg_DRL_SPS_CCRA_SFL_UAV}, while $u$ is fixed at ${\widetilde{u}}$.
\end{itemize}

\begin{figure}[t]
\centering
\subfloat[Latency for one round. \label{time_round_resource_allocation}]
{\includegraphics[width=0.8\linewidth]{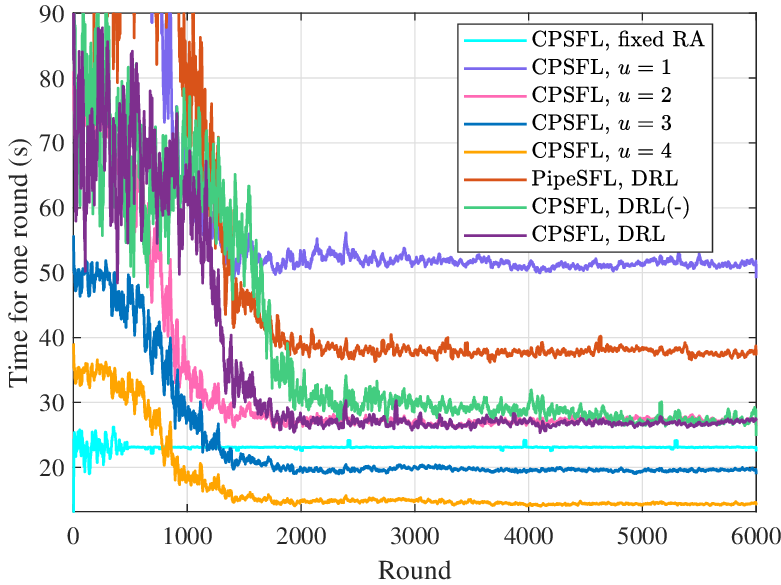}}
\vspace{-0.2cm}
\\
\subfloat[Maximum energy consumption of all UAVs. \label{energy_max_round_resource_allocation}]
{\includegraphics[width=0.8\linewidth]{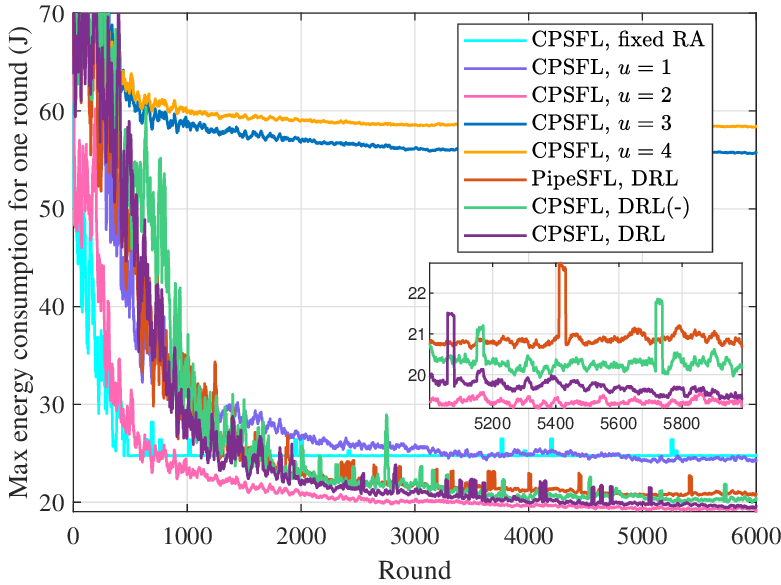}}
\caption{Latency and the maximum energy consumption of all UAVs for one round obtained by the DRL-based CPSFL scheme and the benchmarks: $I=5$.}
\label{latency_energy_resource_allocation} 
\vspace{-0.3cm}
\end{figure}
Fig. \ref{latency_energy_resource_allocation} shows the per-round latency $\tau\left( n \right)$ and the maximum energy consumption of all UAVs for one round ${\max}_k \!\left\{ {{e_k}\!\left( n \right)} \right\}$ obtained by the DRL-based CPSFL scheme and the benchmarks when $I=5$.
Each value is a moving average with a span of 21 rounds.
The proposed scheme converges faster than the DRL(-) scheme, attaining an average objective value (see \eqref{Problem_ori_obj}) of 105.8 over the last 1000 rounds, compared to 108.6 for the DRL(-) scheme.
This improvement stems from its ability to reduce environmental uncertainty by learning the relationship among UAV trajectories, actions, and rewards.
Moreover, the proposed DRL scheme can effectively handles the challenging joint optimization of SPS and CCRA within CPSFL.
For comparison, the fixed CCRA scheme yields an average objective value of 122.3, confirming that the proposed scheme achieves a superior latency-energy trade-off by adapting to heterogeneous UAV transmit powers and time-varying channel conditions induced by mobility.
Furthermore, the performance of the proposed scheme approaches that of the CPSFL scheme with $u=2$, which is the best fixed split point scheme evaluated by the objective values in \eqref{Problem_ori_obj}.
This shows that the BS agent in the proposed scheme can automatically select a high-performance split point.
Notably, when both CPSFL and PipeSFL employ the DRL for decision-making, CPSFL consistently achieves lower latency and energy consumption, further validating the benefits of sequential GT and CPSFL's two key enhancements.

\section{Conclusions}
\label{sec_Conclusions}
In this paper, we have proposed a sequential GT paradigm, where the server dedicates all downlink resources for the current GT.
We have further proposed CPSFL, characterized by downlink GT priority scheduling and intra-round asynchronous training.
We have investigated CPSFL-based LoRA fine-tuning of FMs in UAV networks and have formulated an optimization problem to minimize the per-round training latency and the worst-case client energy consumption by optimizing the SPS, the uplink bandwidth allocation, and the server computing frequency allocation.
To solve this problem, we have developed an attention-based DRL framework, where the BS agent decides the split point and the CCRA in each round by leveraging previous round information including UAV trajectories.
Simulation results have shown that the proposed DRL-based CPSFL scheme outperforms the benchmarks, the ablation variants, the fixed CCRA scheme, and approaches the best fixed-SPS scheme.
This study lays the foundation for finer-grained pipelining paradigms in SFL.

\appendices   
\section{Proof of theorem 1}
\label{sec_Proof_theorem1}
With Assumption \ref{assumption_channel_constant_round}, the lags satisfy $l_1\le l_2\le\cdots\le l_K$.
The proposed priority scheduling strictly follows this lag ordering. 
We suppose that there exist indices $m,k\in {\cal K}$ with $m>k$ (so $l_m\geq l_k$) such that swapping their priorities, i.e., assigning client $m$ a lower priority than client $k$, reduces the per-iteration training latency in \eqref{time_per_iteration}.
This latency is rewritten as
\begin{align}
T_{\rm{iter}}=&\max \bigg\{\max_{j\in\left[1,k-1\right]\cup\left[m+1,K\right]}{\left\{t_j+l_j\right\}},t_m+l_m,\nonumber\\
&\max_{q\in\left[k+1,m-1\right]}{\left\{t_q+l_q\right\}},t_k+l_k\bigg\}.
\end{align}
For any $q\in[k+1,m-1]$, we have $l_m\geq l_q\geq l_k$.
We denote $t_k=\sum_{j=k}^{K}\tau_{SG,j}$, thus we have $t_k>t_q>t_m$.
We use the superscript * to denote all intervals and timestamps after the swap.
After swapping, we have $t_m^\ast>t_q^\ast>t_k^\ast$, $t_m^\ast=t_k$, and $t_j^\ast=t_j$, $\forall j\in\left[1,k-1\right]\cup\left[m+1,K\right]$.
Besides, the per-iteration training latency becomes
\begin{align}
T_{\rm{iter}}^\ast=&\max \bigg\{\max_{j\in\left[1,k-1\right]\cup\left[m+1,K\right]}{\left\{t_j^\ast+l_j\right\}},t_m^\ast+l_m,\nonumber\\
&\max_{q\in\left[k+1,m-1\right]}{\left\{t_q^\ast+l_q\right\}},t_k^\ast+l_k\bigg\}.
\end{align}

Since $t_m^\ast=t_k$ and $l_m\geq l_k$, we have $t_m^\ast+l_m\geq t_k+l_k$.
Since $t_m^\ast>t_q$ and $l_m\geq l_q$, we have $t_m^\ast+l_m>t_q+l_q$. 
Since $t_m^\ast>t_m$, we have $t_m^\ast+l_m>t_m+l_m$.
Therefore, $T_{\rm{iter}}^\ast\geq T_{\rm{iter}}$, meaning the swap does not reduce latency, contradicting the initial assumption.
Moreover, any priority ordering can be obtained by starting from the descending-lag order and performing a sequence of such swaps. 
Since none of these swaps can reduce the latency, the ordering that prioritizes clients with larger lags is optimal.

\section{Proof of theorem 2}
\label{sec_Proof_theorem2}
With Assumption \ref{assumption_channel_constant_round}, the lags satisfy $l_1\le l_2\le\cdots\le l_K$.
The proposed priority scheduling strictly follows this lag ordering. 
We suppose that there exist indices $m,k\in {\cal K}$ with $m>k$ (so $l_m\geq l_k$) such that swapping their priorities, i.e., assigning client $m$ a lower priority than client $k$, reduces the per-round training latency.

We use the superscript * to denote all intervals and timestamps after the swap.
Let $T_{CPSFL,k}$ be the latency for client $k$ to complete all iterations in a round.
We will show $T_{CPSFL}^\ast\geq T_{CPSFL}$, thereby proving Theorem \ref{optimal_schedule_asynchronous}.

Firstly, lowering client $m$'s priority may cause its GT to wait for those of clients $k$ through $m-1$, so $T_{CPSFL,m}^\ast \geq T_{CPSFL,m}$.
Secondly, raising client $k$'s priority eliminates its need to wait for transmissions of clients $k+1$ through $m$, so $T_{CPSFL,k} \geq T_{CPSFL,k}^\ast$.

Thirdly, client $m$'s priority after the swap equals client $k$'s priority before the swap.
Thus, the number of higher-priority clients remains unchanged.  
Let $\varrho_k$ and $\varrho_k^\ast$ denote the number of GTs client $k$ must wait for before and after the swap, respectively, with $\varrho_k,\varrho_k^\ast\in\left[I,\sum_{j=k}^{K}I\right]$.  
Let ${\overline{\tau}}_{SG,k}$ and ${\overline{\tau}}_{SG,k}^\ast$ be the corresponding average per-transmission latencies.  
Then, we approximate $T_{CPSFL,k}\approx\varrho_k{\overline{\tau}}_{SG,k}+Il_k$ and $T_{CPSFL,m}^\ast\approx\varrho_m^\ast{\overline{\tau}}_{SG,m}^\ast+Il_m$.
Under Assumption \ref{assumption_latency_SG_equal}, we have $\tau_{SG,k}=\tau_{SG}$, $\forall k$ and ${\bar{\tau}}_{SG,m}^\ast={\bar{\tau}}_{SG,k}$.
Moreover, under Assumption \ref{assumption_latency_SG_equal} and $l_m\geq l_k$, client $k$ is enqueued more frequently than client $m$, implying $\varrho_m^\ast\geq\varrho_k$. 
Hence, $T_{CPSFL,m}^\ast\geq T_{CPSFL,k}$.
Similarly, the priority of client $j$ remains unchanged by the swap, $\forall j\in\left[k+1,m-1\right]$.
We approximate $T_{CPSFL,j}\approx\varrho_j{\bar{\tau}}_{SG,j}+Il_j$ and $T_{CPSFL,j}^\ast\approx\varrho_j^\ast{\bar{\tau}}_{SG,j}^\ast+Il_j$.
Under Assumption \ref{assumption_latency_SG_equal}, we have ${\bar{\tau}}_{SG,j}={\bar{\tau}}_{SG,j}^\ast$.
Moreover, under Assumption \ref{assumption_latency_SG_equal} and  $l_m\geq l_k$, client $k$ is enqueued more frequently than client $m$, implying $\varrho_j^\ast\geq\varrho_j$. 
Hence, $T_{CPSFL,j}^\ast\geq T_{CPSFL,j}$.

Finally, since client $j$'s priority are unaffected by the swap, $\forall j\in\left[1,k-1\right]\cup\left[m+1,K\right]$, their per-round training latency is basically unchanged, i.e., $T_{CPSFL,j}^\ast=T_{CPSFL,j}$.

These results imply $T_{CPSFL}^\ast\geq T_{CPSFL}$, contradicting the assumption that the swap reduces latency.  
Since any priority ordering can be reached via successive swaps from the descending-lag order, without reducing latency, the proposed lag-based priority scheduling is optimal.

\bibliographystyle{IEEEtran}%

\bibliography{IEEEabrv,papers_zzz}
\end{document}